\newtheorem{thm}{Theorem}%[section]
\newtheorem{lem}[thm]{Lemma}
\theoremstyle{definition}
\def\BState{\State\hskip-\ALG@thistlm}
\algnewcommand\algorithmicswitch{\textbf{switch}}
\algnewcommand\algorithmiccase{\textbf{case}}
\algnewcommand\algorithmicassert{\texttt{assert}}
\algnewcommand\Assert[1]{\State \algorithmicassert(#1)}%
\begin{document}
%
% paper title
% Titles are generally capitalized except for words such as a, an, and, as,
% at, but, by, for, in, nor, of, on, or, the, to and up, which are usually
% not capitalized unless they are the first or last word of the title.
% Linebreaks \\ can be used within to get better formatting as desired.
% Do not put math or special symbols in the title.
%\title{An Optimal Relay Scheme for Fog-based Internet-of-Things (IoT) Networks}
\title{An Optimal Relay Scheme for Outage Minimization in Fog-based Internet-of-Things (IoT) Networks}
%
%
% author names and IEEE memberships
% note positions of commas and nonbreaking spaces ( ~ ) LaTeX will not break
% a structure at a ~ so this keeps an author's name from being broken across
% two lines.
% use \thanks{} to gain access to the first footnote area
% a separate \thanks must be used for each paragraph as LaTeX2e's \thanks
% was not built to handle multiple paragraphs

\author{Babatunji~Omoniwa,
        Riaz~Hussain,
        Muhammad Adil,
        Atif Shakeel,\\
        Ahmed Kamal Tahir,
        Qadeer Ul Hasan,
        and~Shahzad~A.~Malik% <-this % stops a space
\thanks{B. Omoniwa is with the Computer Science Programme, National Mathematical Centre, 118, Abuja, Nigeria, email: tunjiomoniwa@gmail.com.}% <-this % stops a space
\thanks{R. Hussain, M. Adil, A. Shakeel, A. K. Tahir, Q. U. Hasan, and S. A. Malik are with the Department of Electrical Engineering, COMSATS University, Islamabad, Pakistan, email: \{rhussain, atif\_shakeel, qadeer.hasan, smalik\}@comsats.edu.pk, adil34700@gmail.com, ahmedkamaltahir@ieee.org}% <-this % stops a space
\thanks{Copyright (c) 2012 IEEE. Personal use of this material is permitted. However, permission to use this material for any other purposes must be obtained from the IEEE by sending a request to pubs-permissions@ieee.org.}}

% note the % following the last \IEEEmembership and also \thanks -
% these prevent an unwanted space from occurring between the last author name
% and the end of the author line. i.e., if you had this:
%
% \author{....lastname \thanks{...} \thanks{...} }
%                     ^------------^------------^----Do not want these spaces!
%
% a space would be appended to the last name and could cause every name on that
% line to be shifted left slightly. This is one of those "LaTeX things". For
% instance, "\textbf{A} \textbf{B}" will typeset as "A B" not "AB". To get
% "AB" then you have to do: "\textbf{A}\textbf{B}"
% \thanks is no different in this regard, so shield the last } of each \thanks
% that ends a line with a % and do not let a space in before the next \thanks.
% Spaces after \IEEEmembership other than the last one are OK (and needed) as
% you are supposed to have spaces between the names. For what it is worth,
% this is a minor point as most people would not even notice if the said evil
% space somehow managed to creep in.

% The paper headers
\markboth{IEEE Internet of Things Journal,~Vol.~x, No.~x, August~2018}%
{Omoniwa \MakeLowercase{\textit{et al.}}: Optimal Relay Scheme for Fog-Based IoT Networks}
% The only time the second header will appear is for the odd numbered pages
% after the title page when using the twoside option.
%
% *** Note that you probably will NOT want to include the author's ***
% *** name in the headers of peer review papers.                   ***
% You can use \ifCLASSOPTIONpeerreview for conditional compilation here if
% you desire.

% If you want to put a publisher's ID mark on the page you can do it like
% this:
%\IEEEpubid{0000--0000/00\$00.00~\copyright~2018 IEEE}
% Remember, if you use this you must call \IEEEpubidadjcol in the second
% column for its text to clear the IEEEpubid mark.

% use for special paper notices
%\IEEEspecialpapernotice{(Invited Paper)}

% make the title area
\maketitle

% As a general rule, do not put math, special symbols or citations
% in the abstract or keywords.
\begin{abstract}
Fog devices are beginning to play a key role in relaying data and services within the Internet-of-Things (IoT) ecosystem.
These relays may be static or mobile,
with the latter offering a new degree of freedom for performance improvement via careful relay mobility design.
Besides that, power conservation has been a prevalent issue in IoT networks with devices being power-constrained,
requiring optimal power-control mechanisms. In this paper,
we consider a multi-tier fog-based IoT architecture where a mobile/static fog node acts as an amplify and forward relay that transmits received information from a sensor node to a higher hierarchically-placed static fog device,
which offers some localized services.
The outage probability of the presented scenario was efficiently minimized by jointly optimizing the mobility pattern and the transmit power of the fog relay.
A closed-form analytical expression for the outage probability was derived.
Furthermore, due to the intractability and non-convexity of the formulated problem,
we applied an iterative algorithm based on the steepest descent method to arrive at a desirable objective.
Simulations reveal that the outage probability was improved by  62.7\% in the optimized-location fixed-power (OLFP) scheme,
79.3\% in the optimized-power fixed-location (OPFL) scheme, and 94.2\% in the optimized-location optimized-power (OLOP) scheme,
as against the fixed-location and fixed-power (FLFP) scheme (i.e., without optimization).
Lastly, we present an optimal relay selection strategy that chooses an appropriate relay node from randomly distributed relaying candidates.
\end{abstract}

% Note that keywords are not normally used for peerreview papers.
\begin{IEEEkeywords}
Fog-based IoT, outage probability, steepest descent method, non-convex optimization, optimal relay selection strategy.
\end{IEEEkeywords}

% For peer review papers, you can put extra information on the cover
% page as needed:
% \ifCLASSOPTIONpeerreview
% \begin{center} \bfseries EDICS Category: 3-BBND \end{center}
% \fi
%
% For peerreview papers, this IEEEtran command inserts a page break and
% creates the second title. It will be ignored for other modes.
\IEEEpeerreviewmaketitle
\section{Introduction}
\label{sec:Introduction}
\IEEEPARstart{F}{og}-based Internet-of-Things (IoT) is a distributed architecture that offers localized services, which may include storage, processing, database operation, integration, security, and management to IoT end-devices, leveraging on its proximity to the edge of the network. Fog devices are notable for supporting mobility, as well as real-time processing of data and service requests from a wide variety of IoT end-devices. For example, a mobile smartphone, smart wrist-watch, or an industrial robot may become a fog device to provide local control and application data analytics to IoT end-devices within any cyber-physical system (smart transport, smart grid, smart health, smart building, etc.)~\cite{Pan18}. Recently, Industry 4.0, also known as the Industrial Internet-of-Things (IIoT), which is the computerization of the manufacturing sector (mining, textile, food, factory, etc.)~\cite{Aazam2018}, has witnessed rapid technological advancements with fog devices (robots) playing a vital role. With the increase in machine-type communication (MTC), the magnanimity of real-time data produced by the increasingly large number of IoT sensors will require prompt response and processing at the network edge~\cite{Chiang16}-\cite{Yu17}. As such, the fog-based IoT has the potential to enhance the overall performance of the IoT ecosystem, by leveraging its ``things'' proximity and distributed architecture to offer efficient connectivity within the network~\cite{Liu2018}.

With the IoT vision of interconnecting heterogeneous devices~\cite{Lin16},
the sensor nodes which are often the primary source of sensed data within an environment,
communicate wirelessly with other IoT devices, usually fog devices,
in order to transmit data or control messages to a target destination.
A drawback to this communication setup is that these sensors are power-constrained, and are often isolated due to sparse deployment.
Moreover, these sensor nodes  may at any point in time experience degraded quality of service
(QoS) due to long distances or obstructions from the target destination (static fog node) as seen in Fig. \ref{fogarch}.
Due to the possibility of having no line-of-sight (LOS) communication link between the sensor node and the target destination,
relays can be a viable and valuable alternative in ensuring uninterrupted communication.

Over the past decades, there have been many other applications of relays in the area of wireless communications,
some in wireless sensor networks~\cite{Etezadi2012}, vehicular delay tolerant networks (VDTN)~\cite{Bouk2015},
unmanned aerial vehicles (UAV) communication networks~\cite{Geng2017},~\cite{Zhang2018}, and
relay-assisted D2D communications~\cite{Liu2017}, to mention but a few.
Most of these works have a singular objective of improving system performance by minimizing communication outages within the network.
Relay concept is quite new, but becoming more prominent in IoT networks
to assist in long-distance communication~\cite{Lv2018}, with the fog playing a major role.
However, mobile fog/edge nodes are often energy-constrained, unlike the static fog nodes that can be connected to a power source.
With this bottleneck,
it becomes necessary to implement an optimal power-control mechanism in order to minimize communication outage, as well as, utilize device power in a more efficient manner.

In reality, IoT sensors are sparsely dispersed within rooms and corridors in smart building applications,
on roads and vehicles in smart transport, on patients and health facilities in smart health-care, and embedded in meters and home appliances in smart grid applications. These IoT sensor nodes may at any point in time transmit data or service-requests to a destination node via a fog relay. The success of future IoT networks will greatly depend on the way "things" are managed at multi-tier fog layers, where localized service provision can be done. A key motivation for this work is the Industry 4.0, where fog devices (industrial robots) may be used to offer intermediary services, as well as convey important information between smart industrial devices. Also, surveillance drones used in agricultural plantations, industries, car parks and militarized zones for monitoring (intelligence gathering), can also be used to relay information within the network. This surveillance drone has the ability to actively change its location to improve communication performance. This paper aims at emphasizing the importance of relays in future IoT networks.

It is pertinent to note that transmitted packets are often lost or corrupted due to error-prone links, making relay a viable choice.
Another critical thing to consider, when deploying an IoT network, is minimizing the outage by providing redundant links via potential relay nodes.
In a network with several active relaying candidates, an optimal choice of a relay node can significantly raise the
efficiency of the network and consequently the target performance metric at the destination~\cite{Etezadi2012}.
Recently, research efforts have intensified in the development of relay selection strategies~\cite{Mohammed2013}.
Intuitively, a relay selection strategy is one that picks an optimal relay based on some defined objective such as minimizing the
outage probability.

\begin{figure}[!t]
\centering
\includegraphics[width=3in]{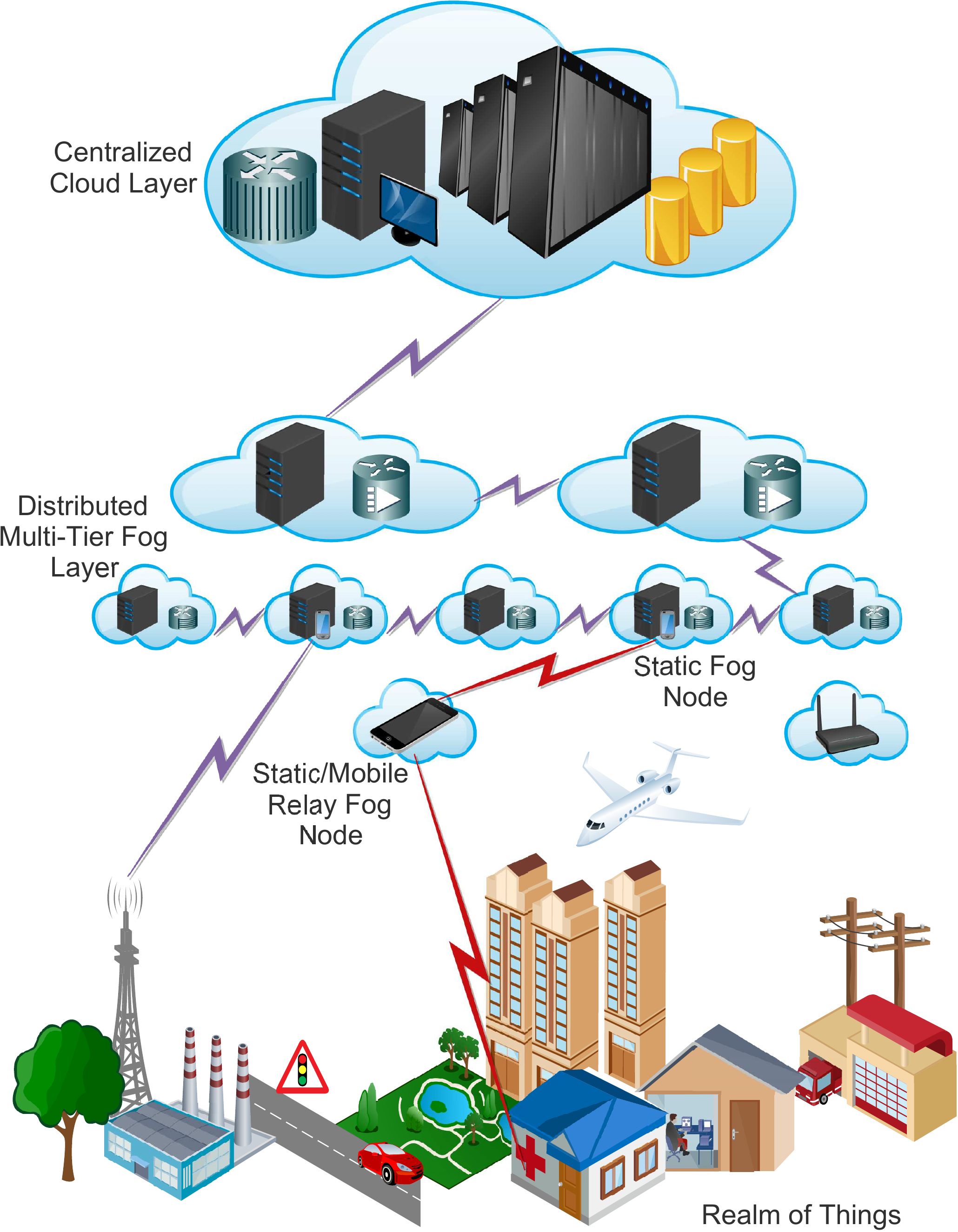}
 %where an .eps filename suffix will be assumed under latex,
% and a .pdf suffix will be assumed for pdflatex; or what has been declared
% via \DeclareGraphicsExtensions.
\caption{Multi-tier fog architecture depicting a relay scenario within the IoT ecosystem.}
\label{fogarch}
\end{figure}

\subsection{Related Work}
There are several works on the use of relays to improve communication within a network. In \cite{Etezadi2012},
a decentralized WSN protocol was proposed in which ``proper'' nodes decide to act as relays without needing any central coordinator. Derivations for the average signal-to-noise ratio (SNR) at the relays and destination
were made with the assumption that the distance between the source and node and also the links between channel terminals are totally random. However, the work considers the model where each relay transmits at an equal power and has a fixed location. The work in \cite{Bouk2015} proposes a bivious relaying scheme to minimize the outage of communicating vehicles within an uncovered area between two distant neighboring Road Side Units (RSUs). Interior point algorithm was applied to the formulated problem to find the optimum speed of the target vehicle. The derived models did not consider the effect of the channel in relay selection. A repeated relay selection (RRS) technique to enhance the packet error rate (PER) performance was proposed in~\cite{Geng2017}, where the forwarded data is split into multiple segments and a relay is reselected prior to the transmission of each segment based on the updated channel state information (CSI). This strategy, however, improves PER performance at the expense of fairness within the network. In \cite{Zhang2018}, a joint power-control and trajectory optimization scheme for UAV relay networks was proposed. In this work, the outage probability was effectively minimized using the gradient descent algorithm. However, the work did not consider relay selection strategies in situations where multiple relay candidates exist.

There is quite a large body of work done in the area of IoT as a whole, however, works pertaining to relay in IoT is in the infancy stage. The work in \cite{Liu2017} proposes an overlay-based green relay assisted D2D communications with dual batteries in heterogeneous cellular networks. The proposed resource allocation approach makes available the IoT services and minimizes the overall energy consumption of the pico-relay base stations (BSs). It maximized the green energy usage by efficiently balancing the residual green energy among the pico relay BSs, thus, saving on-grid energy. However, the relay used was a static BS with the assumption that devices within the macrocell are fully covered, without consideration for outages in communication. A study on the degree of freedom (DoF) of the circular multi-relay multiple-input multiple-output interference (CMMI) channel from both the achievable DoF and converse analysis aspects, was presented in \cite{Lv2018}. A general DoF achievability problem for this model was formulated based on linear processing techniques. The CMMI network model proposed can be considered as a basic component to construct the complex IoT networks. In this work, channel parameters were considered in model design with some mathematical derivations. However, the work only focused on performance optimization of MIMO multi-relay channels (mRCs). Also, further work may be required for practical implementation of this idea.

A framework was developed in \cite{Yan2018} for wireless energy harvesting relay-assisted
underlay cognitive networks (WEH-RCRNs) with randomly distributed spatial nodes. The reuse of unused spectrum for transmitting data and harvesting ambient energy for power supplies are devised as an effective approach for large-scale IoT deployments.
A relay selection technique in WEH-RCRNs was proposed to select a suitable relay that will aid transmission. However, the prime focus was on energy harvesting, rather than on relay communication. The work in \cite{Shokrnezhad2017} focused on resource allocation in an OFDMA-based wireless IoT network, and also included channel assignment and power-control, in satisfying the SINR requirements. The resource allocation problem was formulated as a Mixed-Integer Linear Programming (MILP) problem. Furthermore, a relay-based communication model was derived and used for relay selection. The limitation of this work is that it selects the relay closest to the source as the optimal relay, which may not be valid for all cases. In \cite{Behdad2018}, a novel relay policy on RF energy harvesting was proposed for IoT Networks. The network entities considered in the work were the source, relay, and destination. The impact of energy harvesting on metrics, such as network lifetime and delay were studied to arrive at the optimum relay policy. However, the work did not consider channel variation which is an indispensable feature in IoT networks.

In this backdrop, an optimal relay scheme is proposed that minimizes communication outage, and can be readily deployed by system developers to build a robust fog-based IoT network.

\subsection{Our Work and Contributions}
Our work takes into consideration mobility and power-control constraint in minimizing the outage probability in a fog-based IoT network.
The main contributions of this paper are summarized below:
\begin{enumerate}
  \item In our fog-based IoT relaying scenario, we employ an outage minimization technique based on the steepest descent method to solve the formulated non-convex problem, thereby considerably minimizing the outage probability from source to destination.
  \item We present an iterative algorithm that efficiently minimizes the objective in three (3) proposed schemes, (a) optimized-location optimized-power (OLOP), (b) optimized-location fixed-power (OLFP), and (c) optimized-power fixed-location (OPFL) scheme, all of which yield better performance when compared with the fixed-location and fixed-power (FLFP) scheme. The choice of the scheme to be used depends on the relay design and deployment (static or mobile).
  \item We propose an optimal relay selection strategy which is based on picking an appropriate link that best minimizes the outage with the least convergent value. This proposed strategy will ensure fairness and will also help in ensuring the longevity of power-constrained fog relays.
\end{enumerate}

\subsection{Paper Organization}
The remainder of this paper is organized as follows. In
Section \ref{SystemModelProblemFormulation}, a fog-based IoT communication system model is presented.
In Section \ref{LocPowerOptimization}, we propose an iterative algorithm based on the steepest descent method,
which we apply to different relaying scenarios to minimize our objective.
In Section \ref{OptrelayselectionStrategy}, we consider a scenario where multiple relay candidates exist within the communication space,
hence, we propose an optimal relay selection strategy to select an appropriate link. Results and discussions are presented in Section \ref{Resultsec}.
Finally, Section \ref{conclusion} concludes the paper.

\section{System Model and Problem Formulation}
\label{SystemModelProblemFormulation}
In this section, we consider a scenario where an IoT sensor node (ISN) intends to transmit sensed data to a static fog node (SFN)
 without a line-of-sight link between them due to obstructions.
As such, a relay fog node needs to be deployed to aid the communication from ISN to SFN. Unlike traditional static relaying schemes,
we use a relay fog node (RFN) with some degree of freedom, that is, having the ability to change its location.
The RFN acts as an amplify and forward relay to provide ubiquitous communication between the ISN and SFN.
In this paper, we focus on the communication aided by the RFN and ignore the possibility of a direct communication between the ISN and SFN,
with a motive to emphasize the immense benefits of relays in future IoT networks.

\begin{figure}[!t]
\centering
\includegraphics[width=3in]{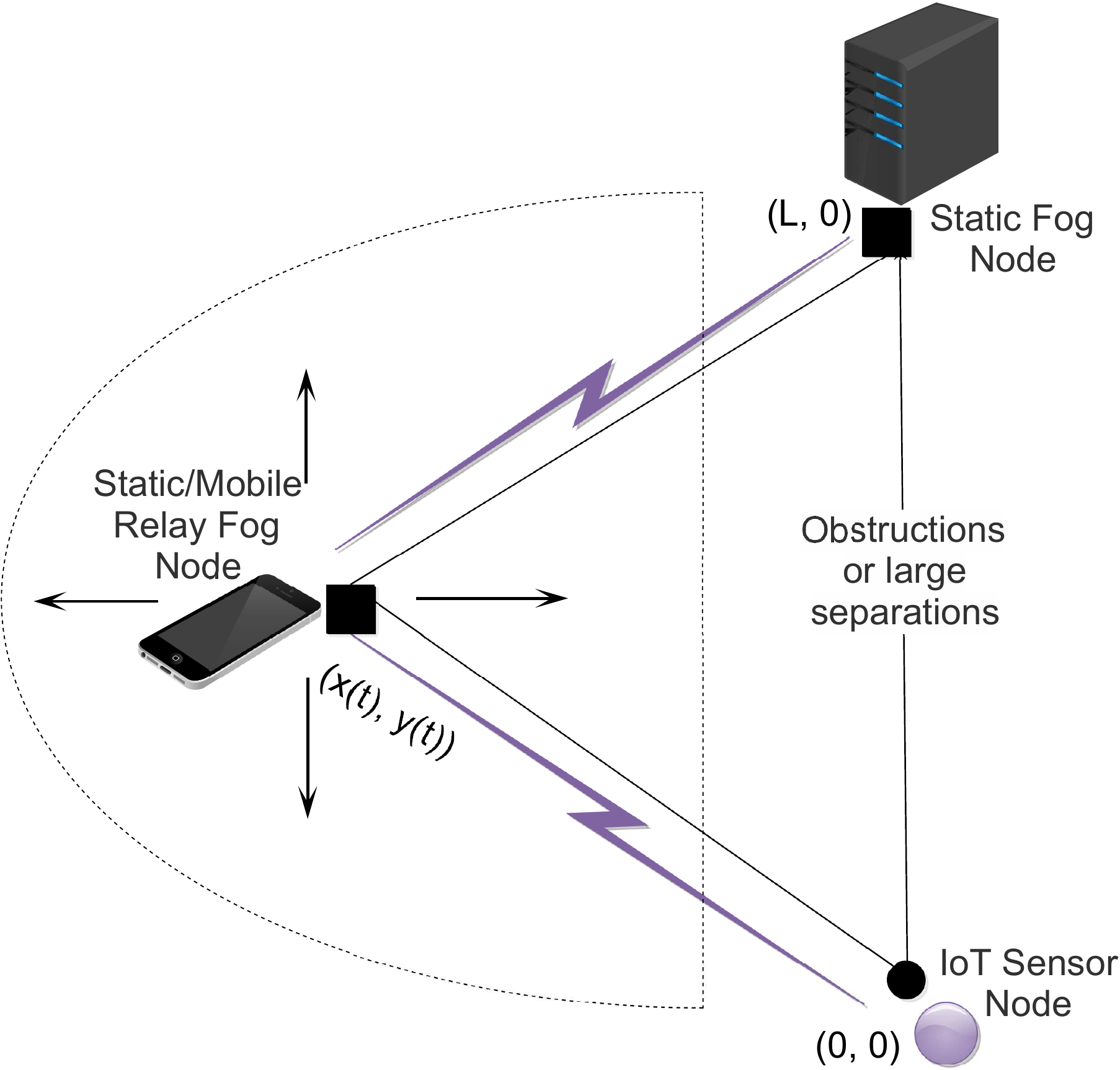}
 %where an .eps filename suffix will be assumed under latex,
% and a .pdf suffix will be assumed for pdflatex; or what has been declared
% via \DeclareGraphicsExtensions.
\caption{System model for IoT communication using a static/mobile relay fog node.}
\label{systemmodel}
\end{figure}

As shown in Fig. \ref{systemmodel}, we consider a Cartesian plane with coordinates of ISN and SFN at~$(0,~0)$ and~$(L,~0)$, respectively.
We assume~$K$ time slots for the transmission phase.
The coordinates of the RFN at time slot~$t$ is given as~$(x(t),~y(t))$,~$0~\leq~t~\leq~K$.
The initial and final location of the RFN can be given as~$(x_0,~y_0)$ and~$(x_f,~y_f)$, respectively.
Using the coordinates, we can express the distance between the ISN and RFN as
\begin{equation}\label{eqn1}
  D_I = \sqrt{x(t)^2 + y(t)^2}
\end{equation}
Also, the distance between the SFN and the RFN is given as
\begin{equation}\label{eqn2}
  D_S = \sqrt{\left(x(t) - L \right)^2 + y(t)^2}
\end{equation}

During transmission, the RFN may dynamically adjust its location in order to improve quality of service (QoS).
We assume two successive time slots where the ISN transmits data to the RFN in~$t$ time slot,
then the RFN will amplify and forward the received data in~$(t + 1)$ time slot.
In both time slots, we assume that there is perfect synchronization between the communicating nodes as in~\cite{Zhang2017}. Now, the separation between the ISN and RFN  is~$D_{I}^t$ in~$t$ time slot and~$D_{I}^{t + 1}$ in~$(t + 1)$ time slot.
Also, the separation between the SFN and RFN is~$D_{S}^t$ in~$t$ time slot and~$D_{S}^{t + 1}$ in~$(t + 1)$ time slot.
%The minimum distance the RFN needs to travel within the~$K$ time slots is given as~$\delta_{m}~=~\sqrt{(x_f - x_0)^2~+~(y_f - y_0)^2}$.
We also assume a change in location of the relay in two consecutive time slots as~$\delta_{\Delta}$.
Similarly, we denote the transmit power of the ISN and the RFN to be~$P_{I}^t$ and~$P_{R}^t$, and ~$P_{I}^{t + 1}$ and~$P_{R}^{t + 1}$,
in slot time~$t$ and~$(t + 1)$, respectively. Next, to derive the outage probability,
we express the received power at the RFN from the ISN in~$t$ time slot as~\cite{Zhang2018}

\begin{equation}\label{eqn3}
  P_{R\leftarrow I}^{t} = P_{I}^{t} (D_{I}^{t})^{-\alpha} (\beta^t_I)^2
\end{equation}

We also express the received power at the SFN from the RFN in~$(t + 1)$ slot time as
\begin{equation}\label{eqn4}
  P_{S\leftarrow R}^{t + 1} = P_{R}^{t + 1} (D_{S}^{t + 1})^{-\alpha} (\beta^{t + 1}_S)^2,
\end{equation}
where~$\beta^t_I$ and~$\beta^{t + 1}_S$ are channel coefficients of ISN to RFN and RFN to SFN, respectively, modeled as~$\mathcal{N}(0,~1)$,
% where~$~\sigma^2_c$ denotes the exponential distribution parameter of the channel,
and~$\alpha$ is the path loss exponent.
We further express the received signal at the RFN as a function of the noise component,
which is modeled as~$\mathcal{N}(0,~N_0)$~\cite{Mohammed2013}.
\begin{equation}\label{eqn5}
  \chi_R^t = \sqrt{P_{I}^{t} (D_{I}^{t})^{-\alpha}} \beta^t_I \xi^t_I +  N_{0_R}^t,
\end{equation}
where~$\xi^t_I$ is the unit energy signal from the ISN, and the noise component that the RFN receives is given as~$N_{0_R}^t$. The RFN amplifies the received signal, hence, the resulting gain is expressed as
\begin{equation}\label{eqn6}
  G = \sqrt{\frac{P_{R}^{t + 1}}{P_{I}^{t} (D_{I}^{t})^{-\alpha} (\beta^t_I)^2  + N_0}}
\end{equation}
We can now express the received signal at the destination SFN as
\begin{equation}\label{eqn7}
\begin{split}
  \chi_S^{t + 1} & = G\sqrt{P_{I}^{t} (D_{I}^{t})^{-\alpha} (D_{S}^{t + 1})^{-\alpha}} \beta^t_I \beta^{t + 1}_S \xi^t_I \\
     & + G  N_{0_R}^t \sqrt{(D_{S}^{t + 1})^{-\alpha}} \beta^{t + 1}_S +  N_{0_S}^{t + 1},
 \end{split}
\end{equation}
where~$N_{0_S}^{t + 1}$ is the received noise at the SFN, and it is also modeled as~$\mathcal{N}(0,~N_0)$. From (\ref{eqn7}), the SNR of the network is given as
\begin{equation}\label{eqn8}
  \gamma_t = \frac{G^2 P_{I}^{t} (D_{I}^{t})^{-\alpha} (\beta^t_I)^2  (D_{S}^{t + 1})^{-\alpha} (\beta^{t + 1}_S)^2}  {G^2 N_0 (D_{S}^{t + 1})^{-\alpha} (\beta^{t + 1}_S)^2 + N_0 }
\end{equation}
In order to derive an expression for the outage probability~$\mathcal{P}_{out}$, we assume a predefined threshold~$\tilde{\gamma}$.
If the value~$\gamma_t$ given in (\ref{eqn8}) falls below the threshold~$\tilde{\gamma}$, then outage in communication occurs~\cite{Mohammed2013}.
Hence, we integrate the probability density function (PDF) of~$\gamma_t$
\begin{equation}\label{eqn9}
  \mathcal{P}_{out}^t = \mathcal{P}[\gamma_t \leq \tilde{\gamma}] = \int_0^{\tilde{\gamma}} f(\gamma_t) d\gamma_t
\end{equation}

\begin{lem}
The approximation for outage probability in successive time slots is given by
\begin{equation}\label{eqn10}
  \mathcal{P}_{out}^t = 1 - (1 + 2\Psi^2 \ln \Psi) \exp\Big( -\frac{N_0 \tilde{\gamma}}{P_{I}^{t} (D_{I}^{t})^{-\alpha}}\Big)
\end{equation}
\end{lem}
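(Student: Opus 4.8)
The plan is to reduce the two-variable outage integral in (\ref{eqn9}) to a single integral with a recognizable closed form, and then to extract the stated expression by a controlled small-argument expansion.

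First, I would substitute the amplifier gain $G$ from (\ref{eqn6}) into the end-to-end SNR $\gamma_t$ in (\ref{eqn8}) and cancel the common factor $P_I^t (D_I^t)^{-\alpha}(\beta_I^t)^2 + N_0$. After dividing numerator and denominator by $N_0^2$, the SNR collapses to the standard amplify-and-forward dual-hop form
\begin{equation}
  \gamma_t = \frac{\gamma_1 \gamma_2}{\gamma_1 + \gamma_2 + 1},
\end{equation}
with per-hop SNRs $\gamma_1 = P_I^t (D_I^t)^{-\alpha}(\beta_I^t)^2 / N_0$ and $\gamma_2 = P_R^{t+1}(D_S^{t+1})^{-\alpha}(\beta_S^{t+1})^2 / N_0$. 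Under Rayleigh fading the channel powers $(\beta_I^t)^2$ and $(\beta_S^{t+1})^2$ are exponentially distributed, so $\gamma_1$ and $\gamma_2$ are independent exponentials with rates $\lambda_1 = N_0 / (P_I^t (D_I^t)^{-\alpha})$ and $\lambda_2 = N_0 / (P_R^{t+1}(D_S^{t+1})^{-\alpha})$. For tractability in the operating regime I would drop the additive unity and work with the tight high-SNR surrogate $\gamma_t \approx \gamma_1 \gamma_2 / (\gamma_1 + \gamma_2)$.

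Next I would evaluate the no-outage probability $\mathcal{P}[\gamma_t > \tilde{\gamma}]$ directly. The event forces $\gamma_1 > \tilde{\gamma}$ and, conditioned on $\gamma_1$, requires $\gamma_2 > \tilde{\gamma}\gamma_1/(\gamma_1 - \tilde{\gamma})$; integrating the exponential tail of $\gamma_2$ against the density of $\gamma_1$ and substituting $u = \gamma_1 - \tilde{\gamma}$ reduces everything to a single integral of the canonical type $\int_0^\infty \exp(-A/u - Bu)\,du = 2\sqrt{A/B}\,K_1(2\sqrt{AB})$, with $A = \lambda_2 \tilde{\gamma}^2$ and $B = \lambda_1$. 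Collecting the prefactors, one finds $1 - \mathcal{P}_{out}^t = \exp(-(\lambda_1 + \lambda_2)\tilde{\gamma})\, z\, K_1(z)$ with $z = 2\tilde{\gamma}\sqrt{\lambda_1 \lambda_2}$, so the exact outage is governed by the modified Bessel function $K_1$.

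The final step is the approximation that produces the elementary form. I would set $\Psi = \tilde{\gamma}\sqrt{\lambda_1\lambda_2} = N_0\tilde{\gamma}/\sqrt{P_I^t (D_I^t)^{-\alpha} P_R^{t+1}(D_S^{t+1})^{-\alpha}}$, i.e. half the Bessel argument, and insert the small-argument expansion $K_1(z) = 1/z + (z/2)\ln(z/2) + O(z)$. Then $z K_1(z) = 1 + 2\Psi^2 \ln \Psi + O(\Psi^2)$, and approximating the residual second-hop factor $\exp(-\lambda_2\tilde{\gamma}) \approx 1$ leaves $\mathcal{P}_{out}^t \approx 1 - (1 + 2\Psi^2\ln\Psi)\exp(-\lambda_1\tilde{\gamma})$, which is exactly (\ref{eqn10}) since $\lambda_1 = N_0/(P_I^t (D_I^t)^{-\alpha})$. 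I expect the main obstacle to be precisely this step: the exact outage is non-elementary, so (\ref{eqn10}) is inherently an approximation, and the delicate points are justifying the regime in which the small-argument expansion and the $\exp(-\lambda_2\tilde{\gamma}) \approx 1$ simplification hold simultaneously, and checking that only the logarithmically dominant $\Psi^2 \ln\Psi$ term survives while the non-logarithmic $O(\Psi^2)$ remainders (carrying the Euler--Mascheroni constant) are legitimately discarded.
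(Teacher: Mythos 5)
Your proposal is correct in its overall architecture and shares the paper's two decisive ingredients: treating the per-hop received powers as exponential random variables, reducing the outage to a factor of the form $2\Psi\,\mathbb{B}_{1}(2\Psi)$ involving the first-order modified Bessel function of the second kind, and then invoking exactly the small-argument expansion $\mathbb{B}_{1}(x)\simeq 1/x+(x/2)\ln(x/2)$ to obtain the elementary expression in (\ref{eqn10}). The route to the Bessel function, however, is genuinely different. The paper rewrites the outage event as $P_{R\leftarrow I}^{t}\leq Q+Z$ with $Z=N_0\tilde{\gamma}/(cP_{S\leftarrow R}^{t+1})$ (see (\ref{AAeqn2})), conditions on the \emph{second} hop, and integrates the exponential CDF of the first hop against the transformed pdf of $Z$; the gain-dependent constant $c=G^2/P_R^{t+1}$ (which in truth depends on the first-hop realization) is effectively frozen at its average, which is why an $(\Upsilon+N_0)/\Upsilon$ factor appears in (\ref{AAeqn6}). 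You instead keep the exact CSI-assisted gain, collapse (\ref{eqn8}) to the harmonic-mean form $\gamma_1\gamma_2/(\gamma_1+\gamma_2+1)$, condition on the \emph{first} hop, and use the canonical integral $\int_0^\infty e^{-A/u-Bu}\,du=2\sqrt{A/B}\,K_1(2\sqrt{AB})$. Your version is the standard (and arguably cleaner) derivation, and you are right to flag the dropped non-logarithmic $O(\Psi^2)$ remainder and the $e^{-\lambda_2\tilde{\gamma}}\approx 1$ step as the genuinely delicate points, which the paper does not discuss.

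The one substantive mismatch is the argument of the Bessel function. Your derivation yields $\Psi=\tilde{\gamma}\sqrt{\lambda_1\lambda_2}$ (a geometric mean of both hops scaled by the threshold), whereas the paper's lemma is used with $\Psi=\sqrt{N_0\tilde{\gamma}/(P_R^{t+1}(D_S^{t+1})^{-\alpha})}=\sqrt{\lambda_2\tilde{\gamma}}$, which depends on the second hop only; the two coincide only when $\lambda_1\tilde{\gamma}=1$. So while you have proved a statement of the claimed \emph{shape}, you have not reproduced the paper's formula with the paper's $\Psi$, and the discrepancy traces back to the different handling of the amplifier gain (exact versus averaged) and of the residual second-hop exponential. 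If the intent is to validate (\ref{eqn10}) as the paper uses it downstream, you would need to either adopt the paper's average-gain conditioning or explain in which regime your $\Psi$ and theirs can be identified.
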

\begin{proof}
The proof is given in Appendix A.
\end{proof}
where~$\Psi = \sqrt{(N_0 \tilde{\gamma})/(P_{R}^{t + 1} (D_{S}^{t + 1})^{-\alpha})}$.
We can now formulate the optimization problem to satisfy certain constraints.
The outage probability~$\mathcal{P}_{out}^t$ given in (\ref{eqn10}) depends on the location and transmit power in the consecutive time slots. As such, our objective will be to minimize~$\mathcal{P}_{out}^t$ by optimizing the power and location of the RFN.

\begin{equation}\label{eqn10a}
\begin{split}
  ~&\min_{P_{I}^{t},~P_{R}^{t + 1},~D_{I}^{t},~D_{S}^{t + 1}} \mathcal{P}_{out}^t\\
s.t.~& P_{I}^{t} + P_{R}^{t + 1} \leq P_{max},\\
      & P_{I}^{t},~P_{R}^{t + 1} \geq 0,\\
       & \delta_{\Delta} \leq \iota,
\end{split}
\end{equation}
where~$\iota$ is the mobility constraint on the RFN.
\section{Location and Power Control Optimization}
\label{LocPowerOptimization}
Optimization techniques such as the Newton and quasi-Newton methods for solving nonlinear
problems are reputable for their speed of convergence if and only if a sufficiently accurate initial approximation is known, otherwise, these methods are limited if the initial approximation is not precise. However, the Steepest Descent Method (SDM) can be used to solve non-convex optimization problems and assures convergence to a local optimum even for poor initial approximations~\cite{Burden2011}.

The SDM is hugely important from a theoretical viewpoint, since it is one of the non-complex methods for which a satisfactory analysis exists. More advanced algorithms, like the ones proposed in this paper, are often motivated by an attempt to change the fundamental steepest descent technique in such a way that the new algorithm will have better convergence properties~\cite{Luenberger2008}.
In our work, we consider the SDM shown in Algorithm \ref{sdm}. In the SDM, the trajectory to the solution follows a zigzag pattern~\cite{Antoniou2007}.

\begin{algorithm}
\caption{Steepest Descent Method}\label{sdm}
\begin{algorithmic}[1]
\State \textbf{data} $x_0 \in R^n$
\State \textbf{initialize} $i = 0$
\BState \emph{top}:
\If {$\nabla \mathcal{P}_{out}(x_i) \leq \epsilon $} \Return stop,
\State compute search direction $h_i \gets -\nabla \mathcal{P}_{out}(x_i)$
\EndIf
\State \textbf{endif}
\BState \emph{compute the step-size}:\\
 $\lambda_i \in \arg{\min_{\lambda \geq 0}}~\mathcal{P}_{out}(x_i + \lambda h_i)$
\BState \emph{set}:\\
$x_{i + 1} = x_i + \lambda_i h_i$
\State \textbf{goto} \emph{top}.
\end{algorithmic}
\end{algorithm}

\begin{lem}
If~$\lambda$ is chosen such that~$\mathcal{P}_{out}(x_i + \lambda h_i)$ is minimized in each iteration,
then successive directions are orthogonal.
\end{lem}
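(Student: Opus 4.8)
The plan is to reduce the orthogonality claim to the first-order optimality condition of the one-dimensional line-search subproblem that defines the step size in Algorithm~\ref{sdm}. First I would introduce the scalar restriction of the objective along the current search ray, namely $\phi_i(\lambda) = \mathcal{P}_{out}(x_i + \lambda h_i)$. By the hypothesis of the lemma, $\lambda_i$ is an exact minimizer of $\phi_i$ over $\lambda \geq 0$; assuming this minimizer is attained at an interior point (that is, $\lambda_i > 0$, which holds whenever $h_i$ is a genuine descent direction), the necessary condition for a minimum gives the stationarity relation $\phi_i'(\lambda_i) = 0$.

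Next I would differentiate $\phi_i$ by the chain rule, obtaining $\phi_i'(\lambda) = \nabla \mathcal{P}_{out}(x_i + \lambda h_i)^{\top} h_i$. Evaluating this derivative at $\lambda = \lambda_i$ and recalling the update rule $x_{i+1} = x_i + \lambda_i h_i$ from the algorithm yields the key identity $\nabla \mathcal{P}_{out}(x_{i+1})^{\top} h_i = 0$. In words, the gradient at the new iterate is orthogonal to the direction along which the line search was performed.

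Finally, I would invoke the definition of the subsequent search direction, $h_{i+1} = -\nabla \mathcal{P}_{out}(x_{i+1})$, and substitute it into the previous identity to conclude that $h_{i+1}^{\top} h_i = -\nabla \mathcal{P}_{out}(x_{i+1})^{\top} h_i = 0$. This establishes that consecutive steepest-descent directions are mutually orthogonal, which is precisely the zigzag trajectory behaviour referred to in the surrounding text.

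The main obstacle is handling the boundary of the feasible step-length interval. Because the line search is restricted to $\lambda \geq 0$, the stationarity condition $\phi_i'(\lambda_i) = 0$ is only guaranteed when the minimizer lies in the interior rather than at $\lambda = 0$. To close this gap I would observe that, for a descent direction, $\phi_i'(0) = -\lVert \nabla \mathcal{P}_{out}(x_i)\rVert^2 < 0$ whenever the gradient is nonzero, so $\phi_i$ is initially strictly decreasing and its minimizer cannot coincide with the left endpoint; I would also rely on $\mathcal{P}_{out}$ being continuously differentiable along the ray, which the closed-form expression in (\ref{eqn10}) supplies. The remaining case $\nabla \mathcal{P}_{out}(x_i) = 0$ is already dispatched by the termination test in Algorithm~\ref{sdm}, where the iteration halts and no successor direction is generated, so the orthogonality statement is vacuously consistent there.
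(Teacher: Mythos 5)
Your proof is correct and follows essentially the same route as the paper's Appendix~B: differentiate $\mathcal{P}_{out}(x_i+\lambda h_i)$ along the ray via the chain rule, apply the first-order optimality condition at the minimizing step size to get $\nabla\mathcal{P}_{out}(x_{i+1})^{\top}h_i=0$, and substitute $h_{i+1}=-\nabla\mathcal{P}_{out}(x_{i+1})$ to conclude $h_{i+1}^{\top}h_i=0$. Your additional treatment of the boundary case $\lambda=0$ is a refinement the paper omits, but the core argument is identical.
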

\begin{proof}
The proof is given in Appendix B.
\end{proof}

In practice, several non-convex approaches, such as gradient
descent and alternating minimization, are used to arrive at a local optimum and often outperform relaxation-based approaches in terms of speed and scalability~\cite{Antoniou2007},~\cite{Boyd2013}. The relaxation-based approaches involve relaxing non-convex problems to convex ones and applying traditional methods to solve the relaxed optimization problem. To better capture the problem structure and eliminate approximation errors that may result from the relaxation-based approach, we employ a non-convex approach in this paper.
Considering the non-convex optimization problem in (\ref{eqn10a}),
we apply an iterative algorithm that is based on SDM to solve the problem in-line with the presented constraints.

Algorithm \ref{proposedalgorithm} show that our proposed solution can be applied in various schemes depending on the type of deployment the designer may desire.
Our algorithm converges to an optimal solution whenever the fractional increase in the outage probability becomes less than a predefined tolerance~$\epsilon$.
To solve for the optimal location of the RFN, we assume the power-control variable in (\ref{eqn10a}) to be given.
Thus, we now have a new expression which is also non-convex, given by
\begin{equation}\label{eqn10a1}
\begin{split}
  ~&\min_{D_{I}^{t},~D_{S}^{t + 1}} \mathcal{P}_{out}^t\\
s.t.~&\delta_{\Delta} \leq \iota
\end{split}
\end{equation}
The expressions from (\ref{eqn1}) and (\ref{eqn2}) are inserted into (\ref{eqn10a1}).
Starting with a two-dimensional initial point~$x_0$ corresponding to~$D_{I}^{t}$ and~$D_{S}^{t + 1}$,
we apply an iterative procedure to achieve local minimum outage. The step-size is a nonnegative scalar minimizing~$\mathcal{P}_{out}(x_i + \lambda h_i)$. In our work, we consider a sufficiently small step-size, such that~$0 < \lambda < \mathbb{L}$, where~$\mathbb{L}$ is the Lipschitz constant,~$(\lambda = 0.001)$ in Algorithm (\ref{sdm}), which ensures convergence to a local minima, with the gradient outage probability~$-\nabla \mathcal{P}_{out}$ in a descending direction.

\begin{lem}
The formulated problem in (\ref{eqn10a}) and (\ref{eqn10a1}) contradicts the condition for convexity, as such, it is known to be NP-hard.
\end{lem}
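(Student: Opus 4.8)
The plan is to treat the two assertions separately: first establish that the objective $\mathcal{P}_{out}^t$ of (\ref{eqn10}) violates the second-order condition for convexity over the feasible regions of (\ref{eqn10a}) and (\ref{eqn10a1}), and then address the passage from non-convexity to NP-hardness, which is the delicate part.

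For the non-convexity, recall that a twice continuously differentiable function on a convex domain is convex if and only if its Hessian is positive semidefinite throughout that domain; it therefore suffices to exhibit one feasible point at which this fails. Rather than assembling the full Hessian in $(P_{I}^{t},~P_{R}^{t+1},~D_{I}^{t},~D_{S}^{t+1})$, I would restrict $\mathcal{P}_{out}^t$ to the line obtained by freezing every variable except $D_{I}^{t}$. Writing $r = D_{I}^{t}$, the restriction collapses to the form $g(r) = 1 - C\exp(-k\,r^{\alpha})$ with constants $C,k>0$ absorbing the frozen quantities. Differentiating twice shows that the sign of $g''(r)$ coincides with that of $(\alpha-1) - k\alpha\,r^{\alpha}$. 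For the relevant path-loss exponent (typically $2\le\alpha\le 4$, so $\alpha>1$) this quantity is positive for small $r$, making $g$ locally convex, and negative for large $r$, making $g$ locally concave. Hence $g''$ changes sign at the interior point $r=\big((\alpha-1)/(k\alpha)\big)^{1/\alpha}$, so $\mathcal{P}_{out}^t$ is neither convex nor concave, which is exactly the contradiction of the convexity condition claimed. The identical computation applied to the reduced objective of (\ref{eqn10a1}) shows it is non-convex as well, the linear mobility constraint $\delta_{\Delta}\le\iota$ notwithstanding.

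The genuinely delicate step is the inference from non-convexity to NP-hardness, because non-convexity is necessary but not sufficient: many smooth non-convex programs admit polynomial-time global solvers. A fully rigorous claim would require a polynomial-time reduction from a recognized NP-hard problem, a quadratic or combinatorial feasibility instance, say, into an instance of (\ref{eqn10a}), and encoding a discrete instance inside the transcendental objective $\mathcal{P}_{out}^t$ is where I expect the main obstacle to lie. I would therefore either (i) attempt such a reduction directly, or (ii) weaken the conclusion to the documented fact that global optimization of general non-convex programs is NP-hard, and then invoke the Hessian argument above to certify that the present program provably leaves the convex, polynomially solvable class, so that the generic intractability applies. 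In short, the sign change of $g''$ is the load-bearing step for the non-convexity half, while the reduction is the crux for the complexity half.
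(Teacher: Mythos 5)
Your proposal is correct in substance for the non-convexity half, and it takes a genuinely more concrete route than the paper. The paper's own proof computes nothing: it recalls the definition of convexity (with a typo, $(1+\lambda)$ where $(1-\lambda)$ is meant), states the generic second-derivative ``D''-test via the Hessian determinant $D = f_{xx}f_{yy} - f_{xy}^{2}$, remarks that saddle points of gradient-descent methods are escapable under mild regularity, and then simply \emph{asserts} that applying the D-test to (\ref{eqn10a}) shows the critical point is a local minimum. No Hessian entry is ever evaluated and no feasible point violating convexity is exhibited. Your restriction of $\mathcal{P}_{out}^t$ to the line in $D_{I}^{t}$, giving $g(r) = 1 - C\exp(-k r^{\alpha})$ with $g''$ proportional to $(\alpha - 1) - k\alpha r^{\alpha}$ and hence changing sign at an interior point, supplies exactly the certificate the paper omits; the only housekeeping you should add is that $C = 1 + 2\Psi^{2}\ln\Psi > 0$ for the parameter ranges used and that the sign change lands inside the feasible range of $D_{I}^{t}$. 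In that sense your argument for non-convexity is strictly stronger than the one printed.

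On the NP-hardness half, your hesitation is the correct reading of the situation: the paper offers no reduction either, and the inference ``non-convex, therefore NP-hard'' is invalid in general, as you say. Neither your proposal nor the paper actually establishes the complexity claim. The defensible version of the lemma is your option (ii): the program provably leaves the convex, polynomially solvable class, and global optimization of general non-convex programs is NP-hard, so only local-optimality guarantees (such as those of the steepest-descent iteration the paper then uses) can be expected. That appears to be what the authors intend, but it is a weaker statement than the one the lemma asserts, and you are right to flag the gap rather than paper over it.
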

\begin{proof}
The function~$\mathcal{P}_{out}^t (x)$ is convex if
\begin{equation}\label{eqn10proof1}
  \mathcal{P}_{out}^t (\lambda x + (1 + \lambda)y) \leq \lambda \mathcal{P}_{out}^t (x) + (1 + \lambda) \mathcal{P}_{out}^t (y),
\end{equation}
$\forall~x, y$~and~$\forall~\lambda \in [0,1]$~\cite{Boyd2013}. Furthermore, we apply a ``D-test'' for optimizing multivariate functions, which involves solving the determinant of the Hessian
matrix of~$\mathcal{P}_{out}^t$,~$H_{f(x, y)} = \begin{pmatrix}f_{xx} & f_{xy}\\ f_{xy} & f_{yy}\end{pmatrix}$, where~$f = \mathcal{P}_{out}^t$, and~$(x_0, y_0)$ is a critical point of~$f$. Intuitively, we see that~$f$ has a continuous second partial derivative,~such that~$D(x,y) = f_{xx}f_{yy} - f_{xy}^{2}$.
\begin{itemize}
  \item If~$D(x_0, y_0) > 0$ and
  \begin{itemize}
    \item if~$f_{xx}(x_0, y_0) > 0$, then $f(x_0, y_0)$ is a local minimum of~$f$;
    \item if~$f_{xx}(x_0, y_0) < 0$, then $f(x_0, y_0)$ is a local maximum of~$f$.
  \end{itemize}
  \item If~$D(x_0, y_0) < 0$ then ~$(x_0, y_0)$ is a saddle point of~$f$.
\end{itemize}
It is important to note that there exist optimization problems that converge to saddle points in the worst case initialization of gradient descent-based methods, rather than a local minimum or local maximum. However, such worst-case analysis do not pose a serious concern to practitioners, as saddle points can easily be escaped under mild regularity conditions~\cite{Lee2016}. We apply the D-test,  to the non-convex problem in (\ref{eqn10a}), which is sufficient to show that the critical point is a local minimum.
\end{proof}

Given the location of the relay at slot~$t$, (\ref{eqn10a}) can expressed as
\begin{equation}\label{eqn10a2}
\begin{split}
  ~&\min_{P_{I}^{t},~P_{R}^{t + 1}} \mathcal{P}_{out}^t\\
s.t.~& P_{I}^{t} + P_{R}^{t + 1} \leq P_{max},\\
      & P_{I}^{t},~P_{R}^{t + 1} \geq 0
\end{split}
\end{equation}
Power-control~\cite{Zhou2015} can be efficiently achieved when~$P_{I}^{t} + P_{R}^{t + 1} = P_{max}$. When (\ref{eqn10a2}) is applied in Algorithm \ref{proposedalgorithm}, it significantly minimizes the outage probability.

\begin{algorithm}
\caption{Proposed Iterative Algorithm}\label{proposedalgorithm}
\begin{algorithmic}[1]
\State \textbf{data} $L = 50$ m,~$P_{max} = 26$ dBm
\State \textbf{initialize} $i = 0,~D_{I}^{t} = D_{S}^{t + 1} = L/2,$~$P_{I}^{t} = P_{R}^{t + 1} = P_{max}/2,~\forall~t \in \{1,~3,~5,...,~K\}$
\While{$(\mathcal{P}_{out}^t)^{ith} - (\mathcal{P}_{out}^t)^{(i-1)th} > \epsilon$}
\State $i = i + 1$

\BState \emph{optimal location}:
\For{$t = 1 : K$}\\
~~~~~apply Algorithm (\ref{sdm}) to solve (\ref{eqn10a1});
\EndFor
\State \textbf{endfor}

\BState \emph{optimal power}:
\For{$t = 1 : K$}\\
~~~~~apply Algorithm (\ref{sdm}) to solve (\ref{eqn10a2});
\EndFor
\State \textbf{endfor}

\State $s$~$\leftarrow$ DesignScheme

\Switch{$s$}
    \Case{OLFP:}\\
      %\Assert{get optimal location}
       ~~~~~~~~~~~~\textbf{solve} \emph{optimal location};
    \EndCase

    \Case{OPFL:}\\
      %\Assert{get optimal location}
       ~~~~~~~~~~~~\textbf{solve} \emph{optimal power};
    \EndCase

    \Case{OLOP:}\\
      %\Assert{get optimal location}
       ~~~~~~~~~~~~\textbf{jointly solve} \emph{optimal location~and optimal power};
    \EndCase

 \EndSwitch
\State \textbf{endswitch}

\EndWhile
\State \textbf{endwhile}

\end{algorithmic}
\end{algorithm}

% footnotes above bottom floats. This can be corrected via the
% \fnbelowfloat command of the stfloats package.

\section{Optimal Relay Selection Strategy}
\label{OptrelayselectionStrategy}
In a network with several active relaying candidates, an optimal choice of a relaying node can substantially improve the efficiency of
the network and consequently the desired performance metric at the destination~\cite{Etezadi2012}.
We present a scenario shown in Fig. \ref{systemmodel2}, where an IoT sensor has data to send to a destination fog device with no LOS between them.
Four potential relay candidates can also be seen in the presented scenario, however,
the mechanism required to select an optimal relay will depend on metric chosen by the system designer. The RFN attempts to adjust its location, or transmit power or even both in \emph{K} time slots in order to minimize communication outage. The convergence value is attained when the outage is significantly minimized based on some pre-defined tolerance as seen in Lemma 4. In this section, we have used the convergence value as a metric to select an optimal relay (that converges fastest) from a set of potential relay candidates. We assume the convergence value to be synonymous with the counter value. Aiming to minimize the outage probability, it is intuitive to select a relay node~$R$ from a set of active fog relays~$R_{act} \in R_i$ using an optimal strategy, where~$R_i$ is the set of all relays in the fog-based IoT network.
Here, we propose a relay selection strategy that chooses the appropriate link via an optimal relay.
The selection is based on the speed of potential~$R \in R_{act}$ to converge to an optimal solution, taking into consideration mobility and power-control constraints of~$R \in R_{act}$.
Algorithm \ref{selectionalgorithm} describes the proposed relay selection strategy. The flow chart can be seen in Fig. \ref{flowchart}.

\begin{figure}[!t]
\centering
\includegraphics[width=3in]{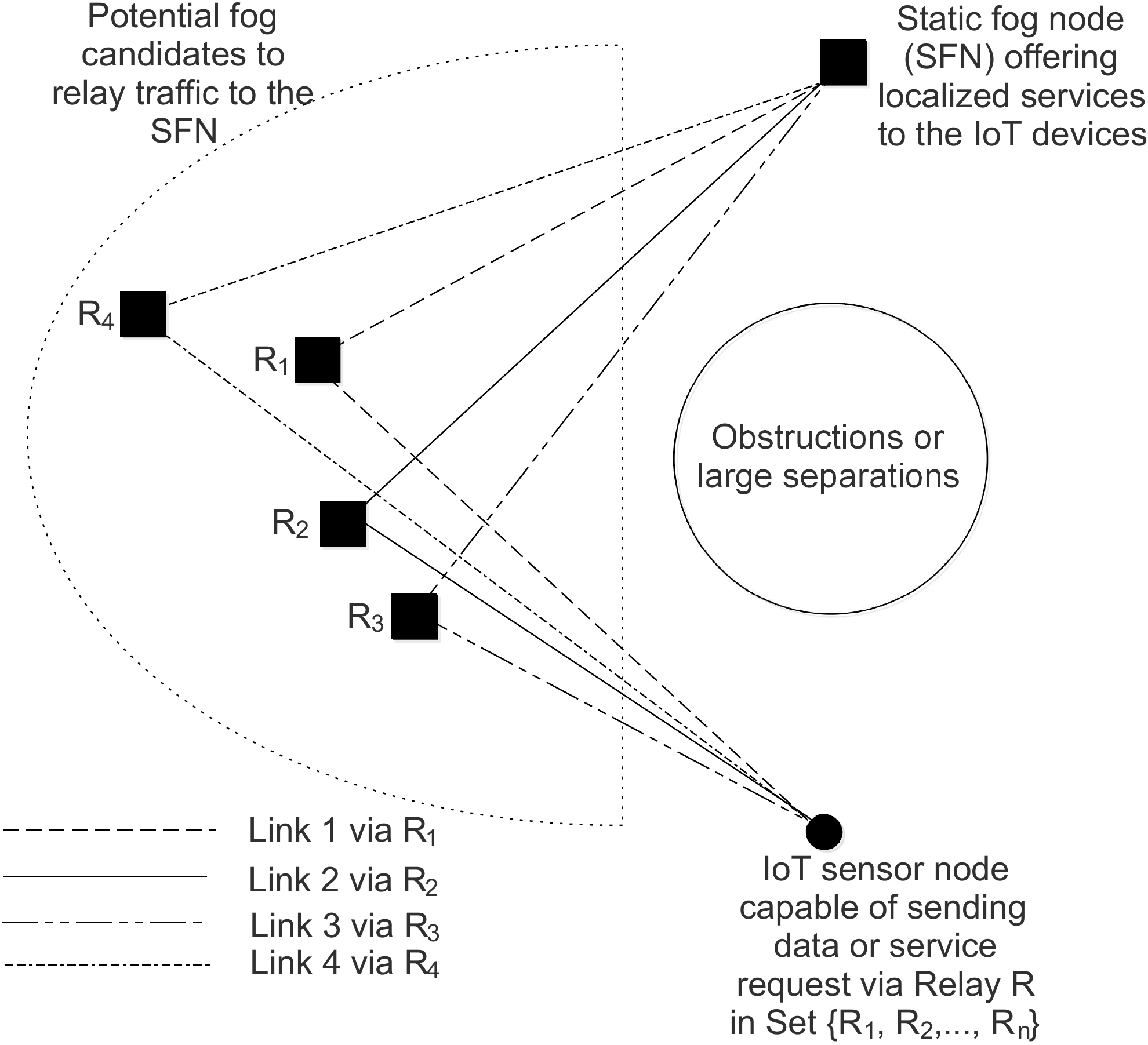}
 %where an .eps filename suffix will be assumed under latex,
% and a .pdf suffix will be assumed for pdflatex; or what has been declared
% via \DeclareGraphicsExtensions.
\caption{System model for IoT communication with multiple fog relays.}
\label{systemmodel2}
\end{figure}

\begin{algorithm}
\caption{Proposed Relay Selection Strategy}\label{selectionalgorithm}
\begin{algorithmic}[1]
\State \textbf{data} $R~\in R_i,~\Theta \in \Theta_i~\forall i =1,~3,~5,...,~n$,~$R_{act}$ denotes active relays,~$\hat{R}$ denotes optimal relay
\State \textbf{initialize}~$n = 0$,~$\Theta_i = 0$;

\While{ISN has data to transmit}
\State ISN sends broadcast to the~$R_{act}$ group

\For{$R_{act} \in R_i$}
\State compute~$\Theta~\leftarrow$ Convergence value
%\begin{footnotesize}~~~~~\%\% \textit{satisfying~$(P_{out}^t)^{ith} - (P_{out}^t)^{(i-1)th} > \epsilon$} \end{footnotesize}
\State Compare~$\Theta~\forall~R_{act}$ and find min~$\Theta_i$
\If {min~$\Theta_i$ is~\textit{found}~\&\&~CTS flag is received}  %\Return $R~\leftarrow$ \textit{$\hat{R}$} %Recompute~$\Theta$,
\State transmit via~\textit{$\hat{R}$}
\State decrease~$\Theta$~$\forall$~$R~\neq$ \textit{$\hat{R}$}
\State recompute~$\Theta~\forall~R~\leftarrow$ \textit{$\hat{R}$}
\State \textbf{elseif} {multiple~min~$\Theta_i$ is~\textit{found}}~\&\&~CTS flag is received
\State schedule all~\textit{$\hat{R}$}~to tx in round robin manner.
\State decrease~$\Theta$~$\forall$~$R~\neq$ \textit{$\hat{R}$}
\State recompute~$\Theta~\forall~R~\leftarrow$ \textit{$\hat{R}$}
\State \textbf{else}
\State freeze~$\Theta~\forall~R~\neq$ \textit{$\hat{R}$} and wait for CTS flag
\EndIf
\State \textbf{endif}
\EndFor
\State \textbf{endfor}
\EndWhile
\State \textbf{endwhile}
\end{algorithmic}
\end{algorithm}

\begin{figure}[!t]
\centering
\includegraphics[width=3in]{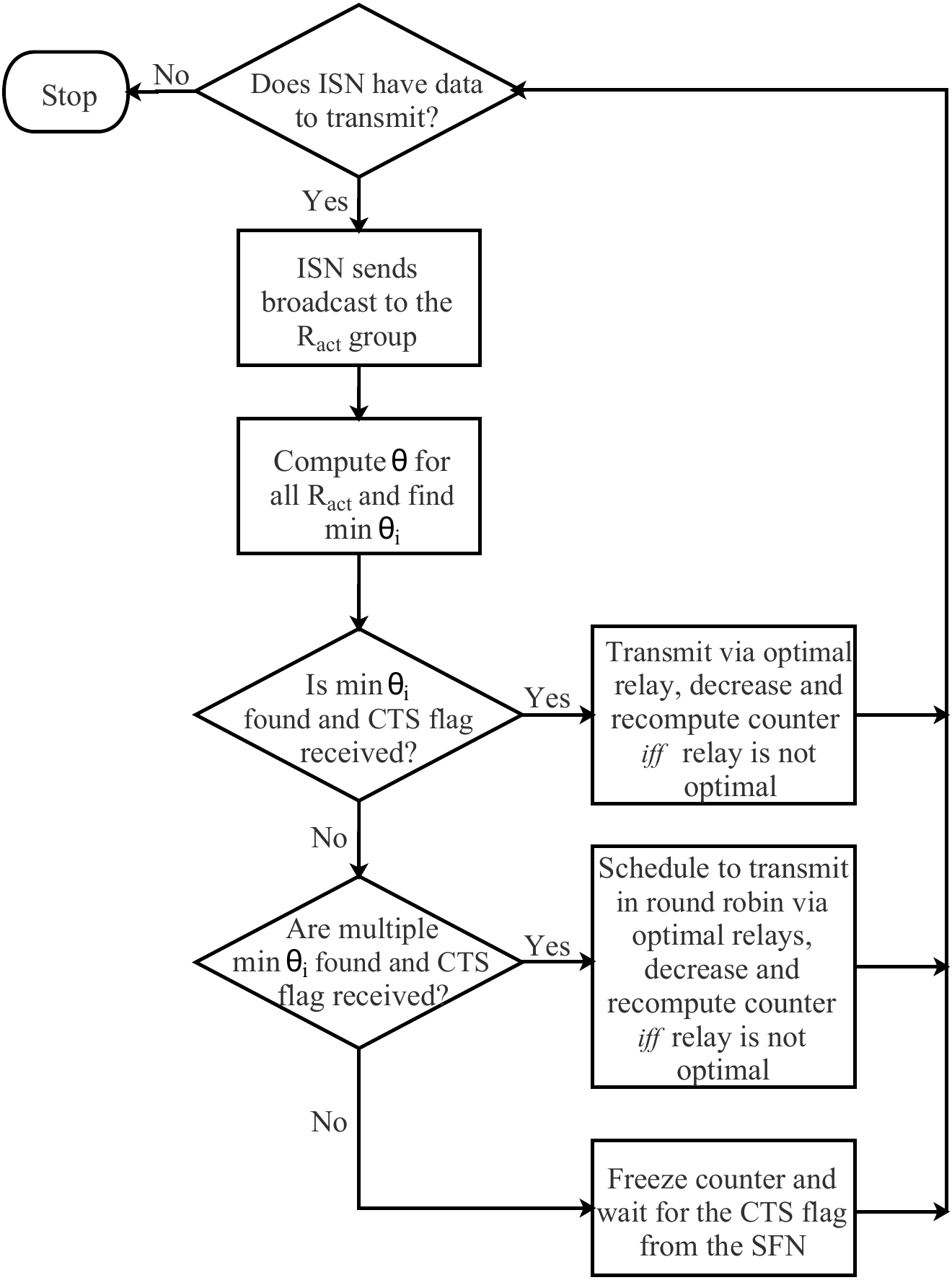}
 %where an .eps filename suffix will be assumed under latex,
% and a .pdf suffix will be assumed for pdflatex; or what has been declared
% via \DeclareGraphicsExtensions.
\caption{Flow chart of the proposed relay selection strategy.}
\label{flowchart}
\end{figure}

\begin{lem}
In a fog-based IoT network with potential active relays,~$R_{act}$,
the relay~$R \in R_{act}$ that minimizes the outage probability with fastest convergence is the optimal relay~$\hat{R}$.

\end{lem}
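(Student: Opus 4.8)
The plan is to recast this statement as a direct consequence of the optimization machinery of Section~\ref{LocPowerOptimization}, treating the convergence value~$\Theta$ of Algorithm~\ref{selectionalgorithm} as the formal selection metric and tying it back to the outage functional~$\mathcal{P}_{out}^t$ in~(\ref{eqn10}). Because the statement is essentially a justification that fastest convergence identifies the best relay, the proof must first make precise what ``optimal'' means and then verify that the selection rule actually realizes it.

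First I would fix notation: for each candidate~$R_j \in R_{act}$, let~$\mathcal{P}_{out}^{*}(R_j)$ be the locally minimized outage returned by applying Algorithm~\ref{proposedalgorithm} to~$R_j$, and let~$\Theta_j$ be the number of steepest-descent updates consumed before the fractional decrease drops below~$\epsilon$. By Lemma~2 the successive search directions are orthogonal, and with the step-size bounded below the Lipschitz constant~$\mathbb{L}$ the iterates yield a monotone nonincreasing outage that terminates at a stationary point; hence each~$\Theta_j$ is finite and well defined. Invoking the D-test of Lemma~3, that stationary point is a local minimum rather than a saddle point or maximum, so every active relay descends to a genuine minimized-outage configuration.

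Next I would argue the selection rule itself. Since the goal is to minimize outage from source to destination at least cost to the relay's mobility and power budget, and since~$\Theta_j$ counts exactly the number of location/power updates a relay performs before settling, the quantity~$\min_j \Theta_j$ identifies the relay that attains its minimized outage fastest. Defining~$\hat{R} \in \arg\min_{R_j \in R_{act}} \Theta_j$ then yields the optimal relay, and the tie-handling branches of Algorithm~\ref{selectionalgorithm} (round-robin scheduling of relays sharing the minimal~$\Theta$) guarantee a consistent choice when the minimizer is not unique.

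The hard part will be reconciling the two coupled criteria in the statement---``minimizes the outage probability'' and ``fastest convergence''---which in general need not coincide, since a relay may converge quickly to a poor local minimum while another converges slowly to a better one. To close this gap I would either impose (or justify from the geometry) that all active relays descend to comparable local minima of the common functional~(\ref{eqn10}), so that~$\Theta$ cleanly breaks ties among near-optimal candidates, or argue that a smaller~$\Theta$ reflects a more favorable initial geometry---shorter~$D_I^t$ and~$D_S^{t+1}$, or stronger channel coefficients~$\beta_I^t,~\beta_S^{t+1}$---which by~(\ref{eqn10}) already lowers the achievable outage, so that fastest convergence and lowest outage become positively correlated rather than in conflict.
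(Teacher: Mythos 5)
Your proposal is considerably more elaborate than the paper's own proof, which consists of just two sentences: it observes that applying Algorithm \ref{proposedalgorithm} monotonically improves the outage expression in (\ref{eqn10}), and concludes that one can therefore record, for each $R \in R_{act}$, the minimal iteration count at which the stopping criterion $(\mathcal{P}_{out}^t)^{ith} - (\mathcal{P}_{out}^t)^{(i-1)th} < \epsilon$ is met. In other words, the paper proves only that the convergence value $\Theta$ is well defined for every active relay, and then treats ``optimal relay $=$ fastest-converging relay'' essentially as a definition rather than as something to be verified. Your first two paragraphs (finiteness and well-definedness of $\Theta_j$ via the monotone descent, Lemma 2's orthogonality, and Lemma 3's D-test) are a more careful version of exactly that argument, so on that portion you and the paper agree.

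Where you genuinely diverge is in your final paragraph, and you have put your finger on the real issue: the lemma couples two criteria --- lowest achieved outage and fastest convergence --- that need not coincide, since a relay could converge quickly to a poor local minimum while a slower relay reaches a better one. The paper's proof does not acknowledge this tension at all, let alone resolve it. Your two proposed repairs (assuming comparable local minima across relays, or arguing that small $\Theta$ correlates with favorable geometry and hence with low achievable outage via (\ref{eqn10})) are plausible heuristics but are not proved in your proposal either, and you are candid about that. So the honest verdict is: your proposal subsumes the paper's argument and correctly identifies a gap that the paper leaves open; neither you nor the paper actually closes it. If you want to match the paper's intent, the cleanest fix is to state explicitly that $\hat{R}$ is \emph{defined} as $\arg\min_j \Theta_j$ among relays that have reached a local minimum of (\ref{eqn10}), at which point the ``lemma'' reduces to the well-definedness argument you already gave; if you want the stronger claim that this $\hat{R}$ also achieves the smallest outage, you would need to prove one of the two correlation arguments you sketched, which the paper does not attempt.
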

\begin{proof}
From (\ref{eqn10}), it can be observed that the expression for outage probability improves when we apply Algorithm \ref{proposedalgorithm}. Hence, we can easily get the minimal iteration required~$\forall R \in R_{act}$ to satisfy the convergence criteria,\\~$$(\mathcal{P}_{out}^t)^{ith} - (\mathcal{P}_{out}^t)^{(i-1)th} < \epsilon$$
\end{proof}
As soon as an optimal relay~$\hat{R} \in R_{act}$ is selected, all other~$R \in R_{act}$ will deactivate their counter and wait for the next transmission phase.
Furthermore, since contention occurs only at the receiver, a clear-to-send (CTS) flag from the destination also triggers
each~$R \in R_{act}$ to begin its down-counter from the initial convergence value.
This proposed strategy will ensure fairness and will also help in ensuring the longevity of the power-constrained fog relay.
This strategy can be better pictured with an example of four active randomly distributed relaying candidates,
an IoT sensor node, and a destination fog node, as can be seen in Fig. \ref{systemmodel2}.
For example, an IoT sensor sends an initial message to probe potential relay candidates. We also assume that the ISN intermittently sends probe packets to potential relays. This action gives the ISN the knowledge of the optimal relay (with least convergence value) to be used in that transmission phase. Applying Algorithm \ref{selectionalgorithm}, given four~$R_{act}$ with convergence value~$\Theta~=~\{36,~11,~7,~63\}$~corresponding to~$R_{act}~=~\{R_1,~R_2,~R_3,~R_4\}$. Intuitively,~$R_3$ will be selected and all other~$R \in R_{act}$ that were not selected will freeze their counter and wait for a CTS flag from the destination (SFN).
This CTS flag sent by the SFN indicates that the previous transmission is complete.
Prior to receiving the CTS flag, the~$R \in R_{act}$ that were not selected can go to an idle state to minimize energy consumption.
As soon as the CTS flag is received by all~$R \in R_{act}$, they all begin to decrease their counters and wait for any data to be relayed from the IoT sensor.
However, the~$\hat{R}$ selected in the previous transmission phase will recompute a new value based on the convergence criteria.
For instance, prior to the second transmission phase when all~$R \in R_{act}$ receive a CTS flag,
we assume a case where all relay candidates decrease their counter by 5 before receiving a CTS flag, and $R_3$ which was previously selected as the optimal relay~$\hat{R}$ now recomputes its convergence time value to be 7.
The new set of convergence values will be updated as~$\Theta~=~\{31,~6,~7,~58\}$~corresponding to~$R_{act}~=~\{R_1,~R_2,~R_3,~R_4\}$.
Algorithm \ref{selectionalgorithm} stipulates that~$R_3$ which was selected as the optimal relay in the previous transmission will not decrease its counter value. The motivation for this mechanism is to allow for fairness and energy balance among potential fog relays.
In this case,~$R_2$ will be selected as~$\hat{R}$ in the second transmission phase.
In a situation where two or more~$R \in R_{act}$ have same convergence value, each will relay in consecutive transmission phase in a round robin manner. This procedure efficiently selects the optimal link with an attempt to also achieve fairness and energy balance within the IoT environment.
We aim at enhancing this strategy to become a full-fledged MAC layer protocol for IoT communication in our future work.

\section{Results}
\label{Resultsec}
In this section, the outage minimization is evaluated through simulations, which were carried out using MATLAB.
To demonstrate the efficiency of the proposed iterative algorithm using the steepest descent method,
the optimized-location optimized-power (OLOP) scheme was compared with three other schemes, the optimized-location fixed-power (OLFP),
the optimized-power fixed-location (OPFL), and the fixed-location and fixed-power (FLFP).

\begin{table}
\small
\centering
\caption{Simulation Parameters}
\label{table:parameters}
\begin{tabular}{ll}
  \hline
 \textit{Parameters} & \textit{Value} \\
  \hline \hline
   Time slots~$K$ & 1500\\
   Simulation space & 50~$\times$ 35 \emph{$m^2$}\\
   Distance~$L$ & 50 \emph{m}\\
   Maximum transmit power~$P_{max}$  & 26 \emph{dBm}\\
   SNR threshold~$\tilde{\gamma}$ & 0 \emph{dB}\\
   Path-loss exponent~$\alpha$ & 4\\
   Noise power~$N_0$ & $-96$ \emph{dBm}\\
   Tolerance~$\epsilon$ & $-10^{-2}$\\
   Mobility constraint~$\iota$  & 0.01 \emph{m}\\
      \hline \hline
 \end{tabular}
 \end{table}

\begin{figure}[!t]
\centering
\includegraphics[width=3.2in]{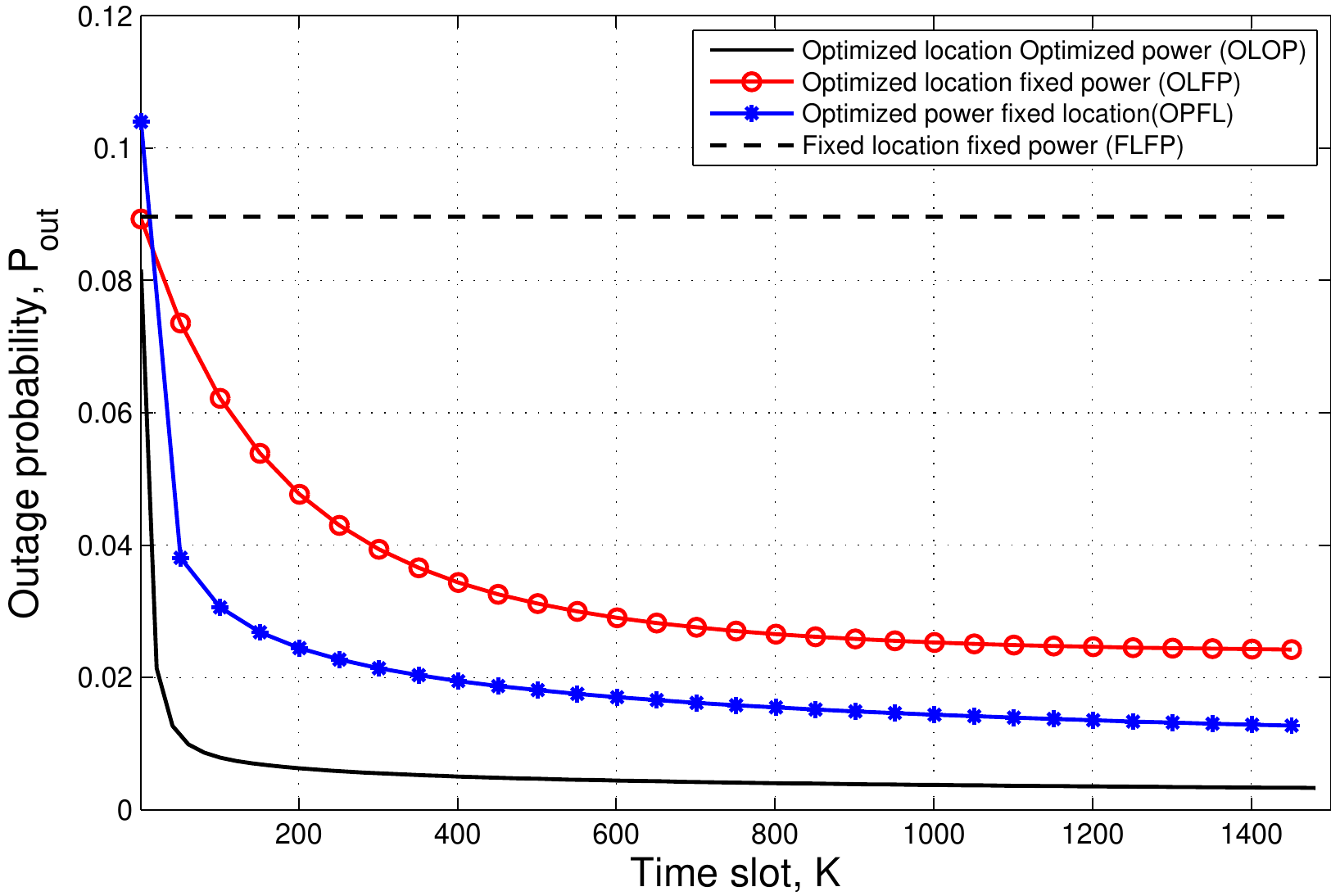}
 %where an .eps filename suffix will be assumed under latex,
% and a .pdf suffix will be assumed for pdflatex; or what has been declared
% via \DeclareGraphicsExtensions.
\caption{Time slot,~$K$ vs. Outage probability, $\mathcal{P}_{out}$.}
\label{optimization}
\end{figure}

The simulation parameters are summarized in Table~\ref{table:parameters}. The transmit power~$P$, noise power~$N_0$, and other parameters used in the simulation is motivated by existing works~\cite{Joshi2017}, \cite{Sodhro2018}. In the FLFP scheme, the transmit power of the ISN and RFN is kept constant and there is also no attempt by the RFN to adjust its location for improved QoS, we observe from Fig. \ref{optimization} that the outage probability was not minimized during the entire time slot~$K$. This approach has been used in modeling several relay networks~\cite{Liu2017},~\cite{Mohammed2013}. In the OLFP scheme, the transmit powers of the ISN and MFRN are fixed at certain values, this implies that the RFN is assumed to be mobile and searches for the optimal location that will minimize the communication outage, we can observe a better improvement in the outage probability of about 62.7\% as compared to FLFP.
In the OPFL scheme, the location of the RFN is fixed with the assumption of a static relay, while the devices regularly adjust their power levels to minimize the communication outage. As seen in Fig. \ref{optimization}, the OPFL scheme gives better performance as compared with the FLFP and the OLFP, however, more power may be required to deliver a certain level of QoS. In OPLF, we observed an improvement of about 79.3\% as compared to FLFP. In the OLOP scheme, both the location and transmit power of the relay can be adjusted. From Algorithm \ref{proposedalgorithm}, the OLOP scheme jointly optimizes location and power in order to minimize the objective function. Fig. \ref{optimization} show the efficiency of OLOP as compared to all other schemes. The outage probability was minimized by 94.2\% as compared to the FLFP scheme.

\begin{figure}[!t]
\centering
\includegraphics[width=3.2in]{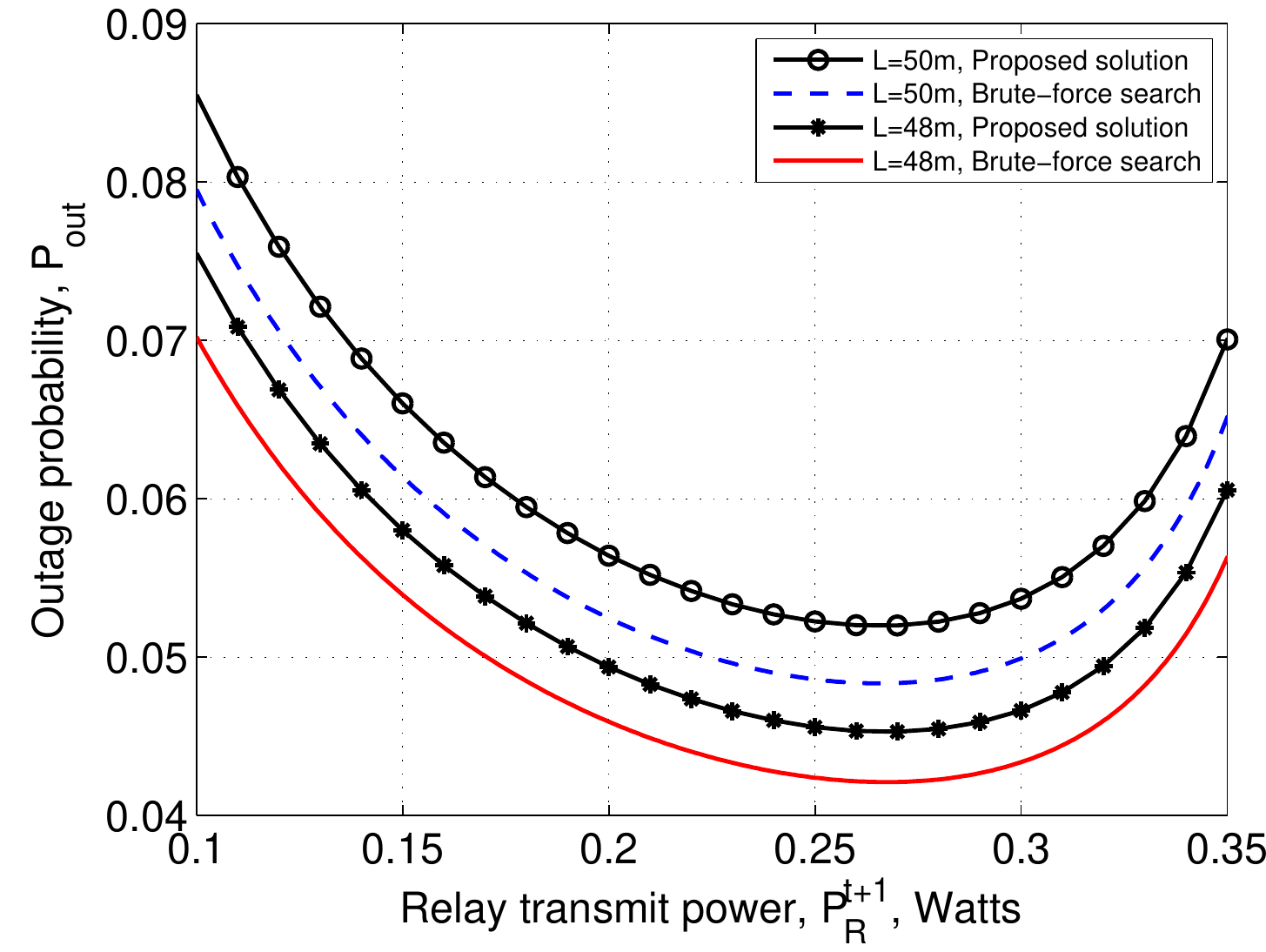}
 %where an .eps filename suffix will be assumed under latex,
% and a .pdf suffix will be assumed for pdflatex; or what has been declared
% via \DeclareGraphicsExtensions.
\caption{RFN transmit power~$P_{R}^{t+1}$ vs. Outage probability, $\mathcal{P}_{out}$.}
\label{relaypower}
\end{figure}

\begin{figure}[!t]
\centering
\includegraphics[width=3.2in]{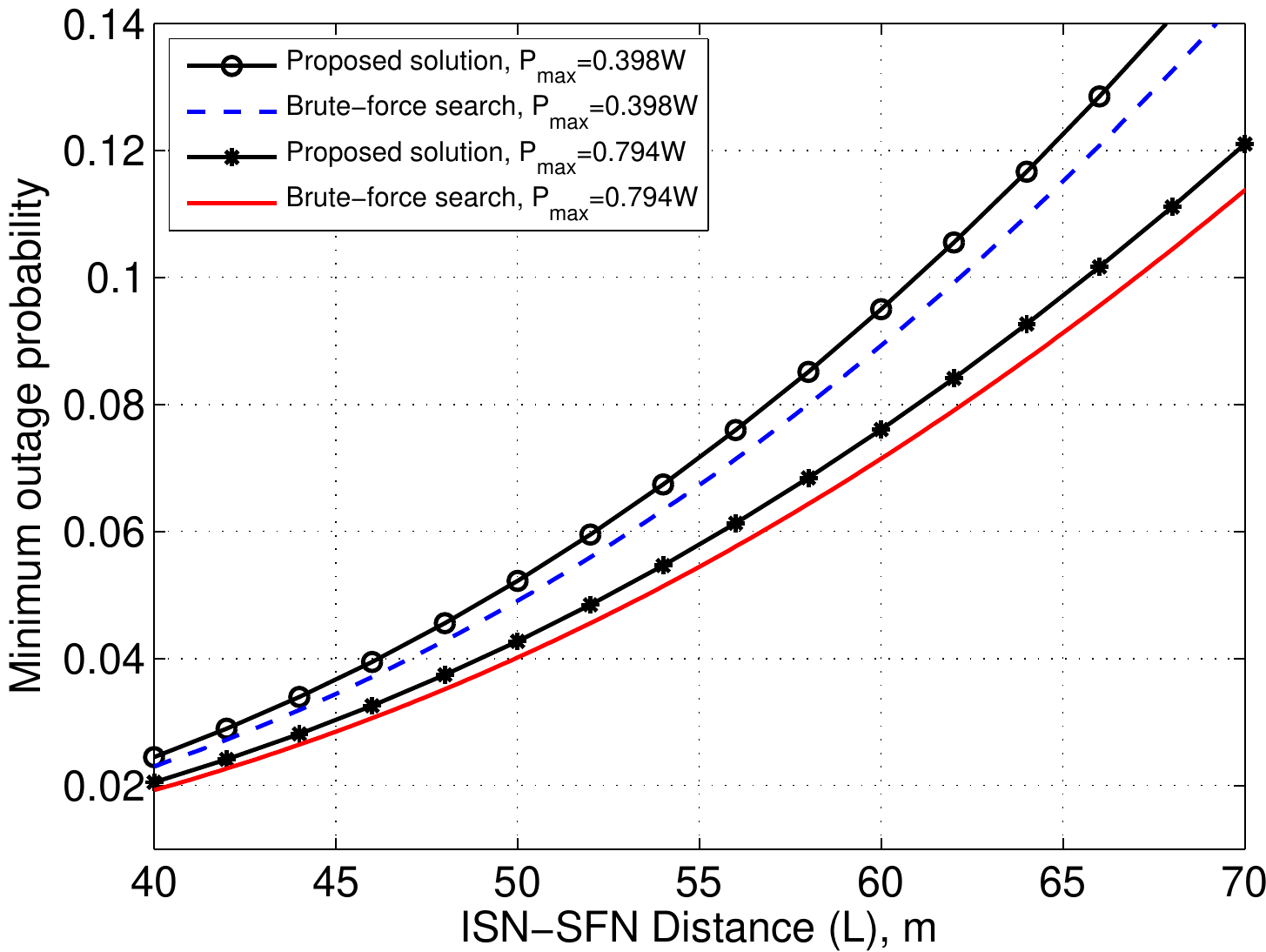}
 %where an .eps filename suffix will be assumed under latex,
% and a .pdf suffix will be assumed for pdflatex; or what has been declared
% via \DeclareGraphicsExtensions.
\caption{ISN-SFN Distance,~$L$ vs. Minimum outage probability.}
\label{relaydist}
\end{figure}

Fig. \ref{relaypower} show the minimum outage probability when plotted against the transmit power of the RFN. Since our proposed solution used significantly large number of transmission time slots,~$K$ for the RFN to achieve the minimum outage probability, we employed the brute-force searching algorithm to obtain the outage probability using about 800 power-control and 3000 RFN location possibilities. We observe that the brute-force search which systematically enumerates all possible candidates for the solution and checks whether each candidate satisfies the constraints, yielded a performance improvement of about 7\% when compared with the proposed solution. It can also be seen that as~$P_{R}^{t+1} \rightarrow 0$, the outage becomes certain, likewise, as~$P_{R}^{t+1}\rightarrow P_{max}$ the outage also becomes certain. In essence, we observe convexity of the outage probability function with respect to~$P_{R}^{t+1}$. From the results in Fig. \ref{relaypower}, we can obtain the value of~$P_{R}^{t+1}$ that gives the minimum outage, that value is regarded as the optimal transmit power at that given separation~$L$ between the source and destination. Furthermore, we also observe that as the separation between the ISN and the SFN drops from 50~$m$ to 48~$m$, the outage probability drops, thus, this implies higher outages for larger values of~$L$. Fig. \ref{relaydist} illustrates the behavior of the minimum outage probability when the separation between the ISN and the SFN changes. We observe a monotonic increase in the outage probability as~$L$ increases. This implies that distance between the source ISN and destination becomes large, higher communication outage will be experienced, thus, validating the results in Fig. \ref{relaypower}. Similarly, we observe that the brute-force search yielded about 6\% performance improvement.

\begin{figure}[!t]
\centering
\includegraphics[width=3.2in]{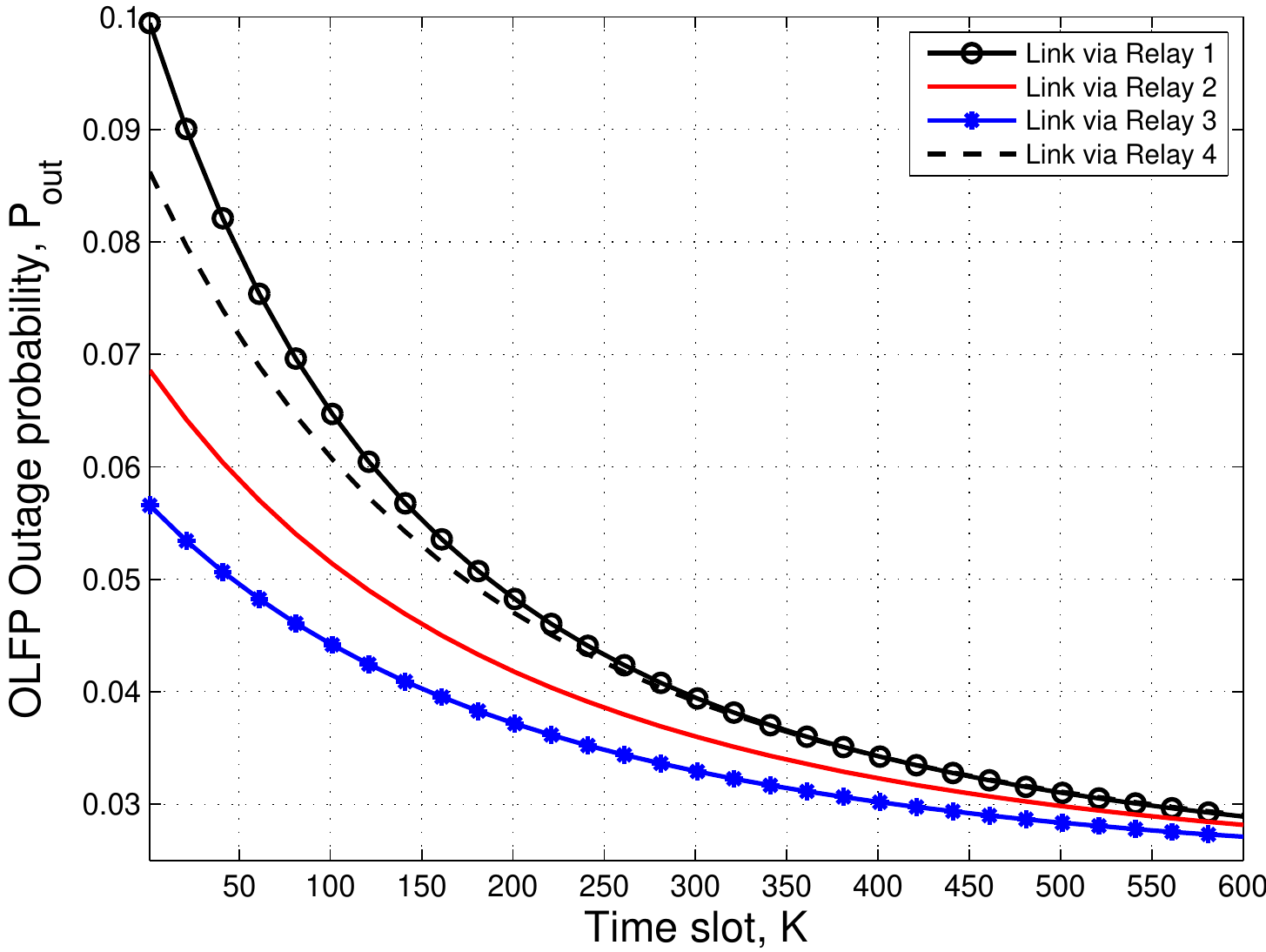}
 %where an .eps filename suffix will be assumed under latex,
% and a .pdf suffix will be assumed for pdflatex; or what has been declared
% via \DeclareGraphicsExtensions.
\caption{Time slot,~$K$ vs. Outage probability, $\mathcal{P}_{out}$ using the OLFP scheme.}
\label{olfp}
\end{figure}

In Fig.~\ref{olfp},~Fig.~\ref{opfl}~and Fig.~\ref{olop}, we examine the link behavior of the proposed schemes using a four (4) relay scenario. Results in Fig.~\ref{olop} show the best convergence with most minimal outage as compared to the OLFP and OPFL schemes. However, we observe that the initial convergence rate of all four relays is closely matched. The same can be seen in Fig.~\ref{opfl} where all relays have almost same initial convergence rate. On the contrary, Fig.~\ref{olfp} show quite unique, but poor initial convergence rate for each relay. This may be as a result of the adjustment of its position to minimize outage. In general, at the time slot,~$K = 600$ all the four relays can be seen to converge in each of the schemes. The choice of the scheme to be used may depend on the fog-based IoT design requirements. This may also depend on the computational complexity of the chosen scheme.

\begin{figure}[!t]
\centering
\includegraphics[width=3.2in]{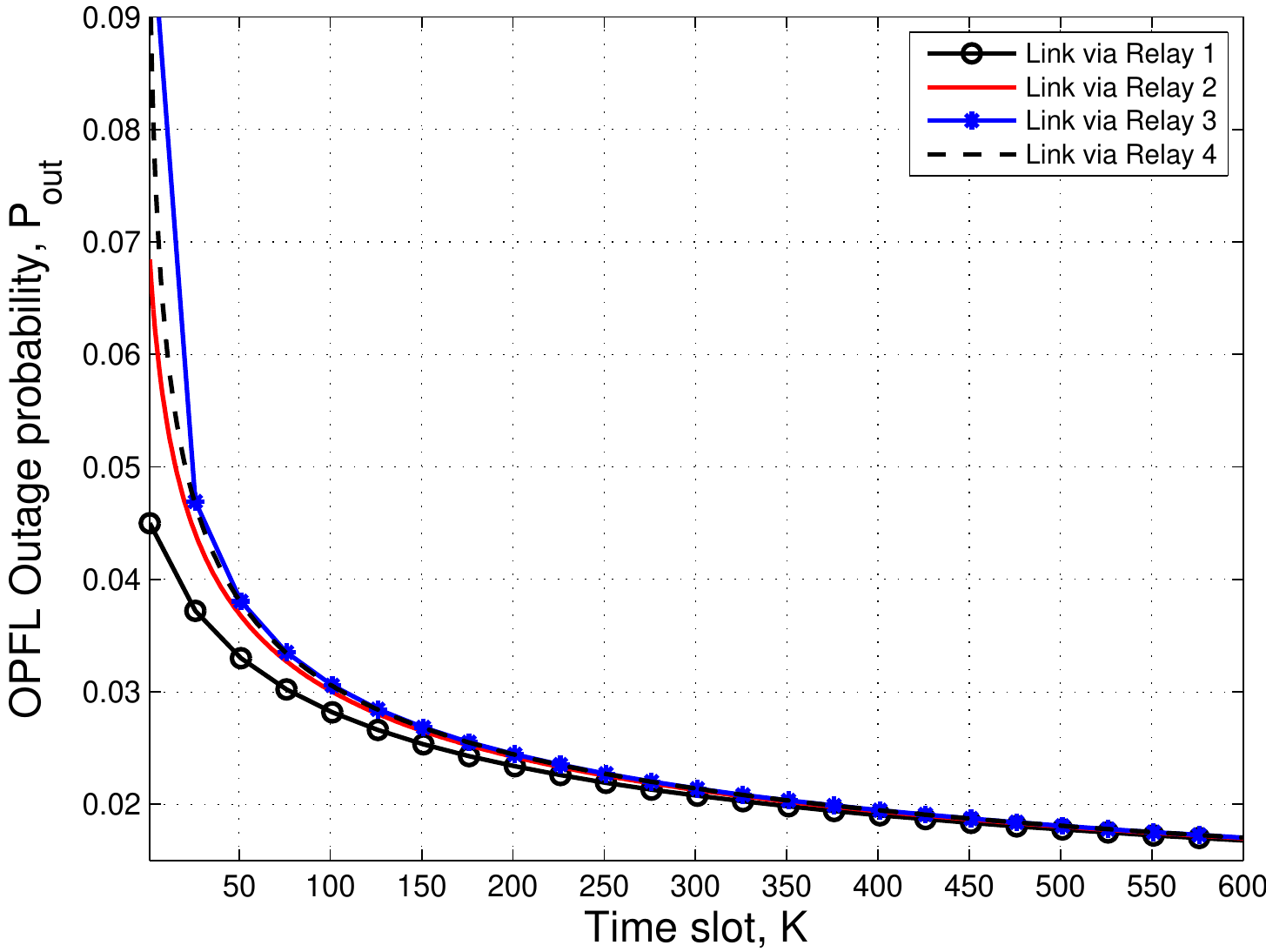}
 %where an .eps filename suffix will be assumed under latex,
% and a .pdf suffix will be assumed for pdflatex; or what has been declared
% via \DeclareGraphicsExtensions.
\caption{Time slot,~$K$ vs. Outage probability, $\mathcal{P}_{out}$ using the OPFL scheme.}
\label{opfl}
\end{figure}

\begin{figure}[!t]
\centering
\includegraphics[width=3.2in]{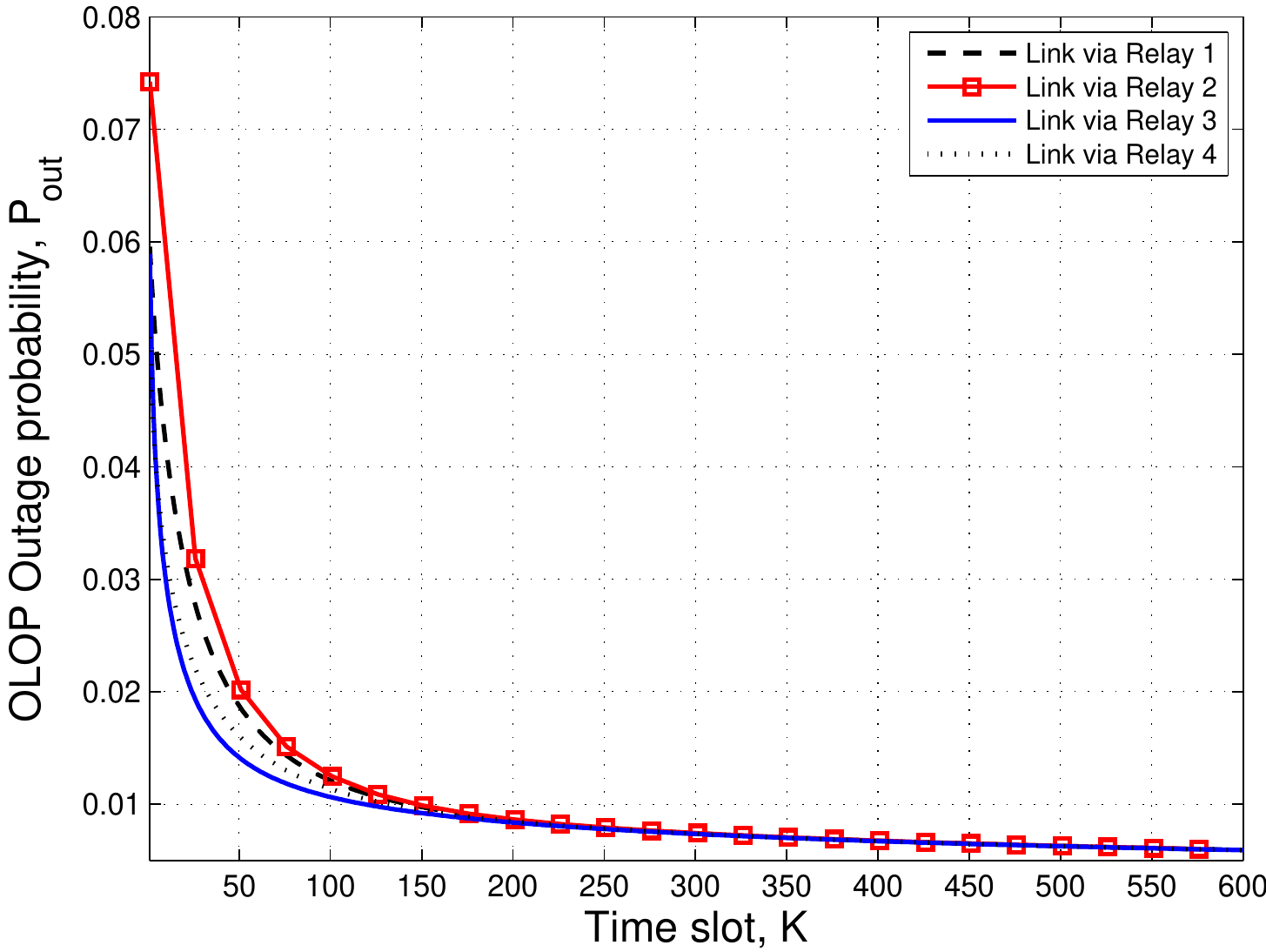}
 %where an .eps filename suffix will be assumed under latex,
% and a .pdf suffix will be assumed for pdflatex; or what has been declared
% via \DeclareGraphicsExtensions.
\caption{Time slot,~$K$ vs. Outage probability, $\mathcal{P}_{out}$ using the OLOP scheme.}
\label{olop}
\end{figure}

\begin{figure}[!t]
\centering
\includegraphics[width=3.2in]{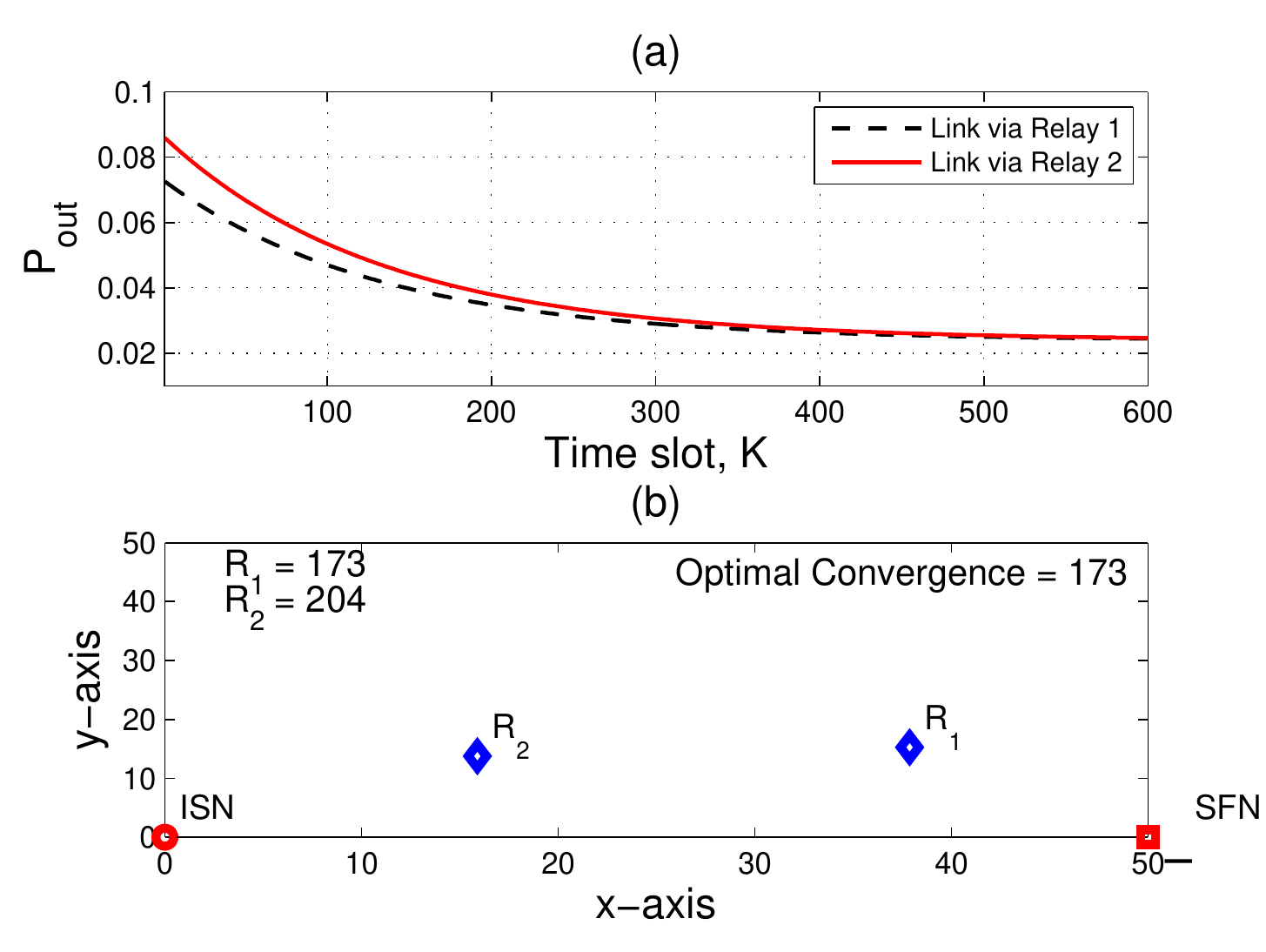}
 %where an .eps filename suffix will be assumed under latex,
% and a .pdf suffix will be assumed for pdflatex; or what has been declared
% via \DeclareGraphicsExtensions.
\caption{Two Relay Scenario (a) Convergence behaviour for each link using the OLFP scheme, (b) Optimal selection of randomly deployed fog relays.}
\label{doubleRelay}
\end{figure}
In order to illustrate our proposed relay selection strategy, we have chosen the OLPF scheme with randomly deployed fog relays.
Fig. \ref{doubleRelay} (a) show the convergence behaviour for each link in a two relay scenario. We modeled the region of potential relay candidates to be a semicircle of diameter 50~\emph{m}, between the ISN and SFN. The relays are then randomly deployed within the region of potential relay candidates as shown in Fig. \ref{systemmodel2}. We then examined the performance of each relay using the convergence criteria. We can see in Fig. \ref{doubleRelay} (b) that~$R_1$ converged faster than~$R_2$ in that particular scenario. This implies that the link via~$R_1$ will be selected during that transmission phase. The proposed algorithm for relay selection can be further applied to ensure fairness and energy balance within the system. Similarly, Fig. \ref{threerelay} (a) and  Fig. \ref{fourRelay} (a) show the convergence behaviour for three and four relay scenario, respectively. The random deployment of these scenarios can be seen in Fig. \ref{threerelay} (b) and  Fig. \ref{fourRelay} (b), with the optimal relay selected in each case. From Fig. \ref{doubleRelay}, Fig. \ref{threerelay} and Fig. \ref{fourRelay}, we observe that the RFN closer to SFN is selected, which may be due to smaller communication outage experienced in the~$(t + 1)$ time slot when the optimal relay is used. The proposed strategy can be applied to large scale fog-based networks where multiple fog devices could be used as potential relays to forward traffic from source to destination.

\begin{figure}[!t]
\centering
\includegraphics[width=3.2in]{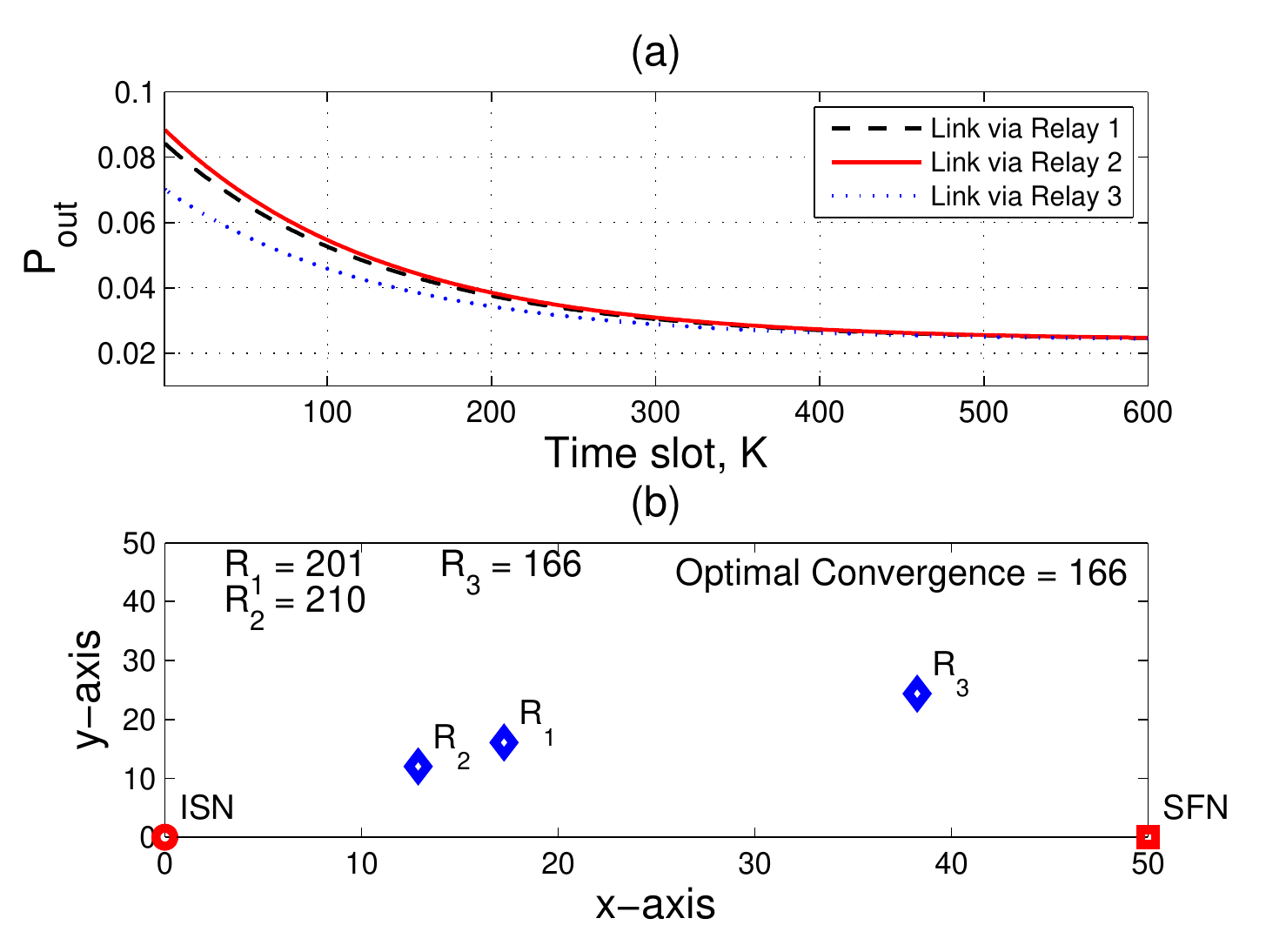}
 %where an .eps filename suffix will be assumed under latex,
% and a .pdf suffix will be assumed for pdflatex; or what has been declared
% via \DeclareGraphicsExtensions.
\caption{Three Relay Scenario (a) Convergence behaviour for each link using the OLFP scheme, (b) Optimal selection of randomly deployed fog relays.}
\label{threerelay}
\end{figure}

\begin{figure}[!t]
\centering
\includegraphics[width=3.2in]{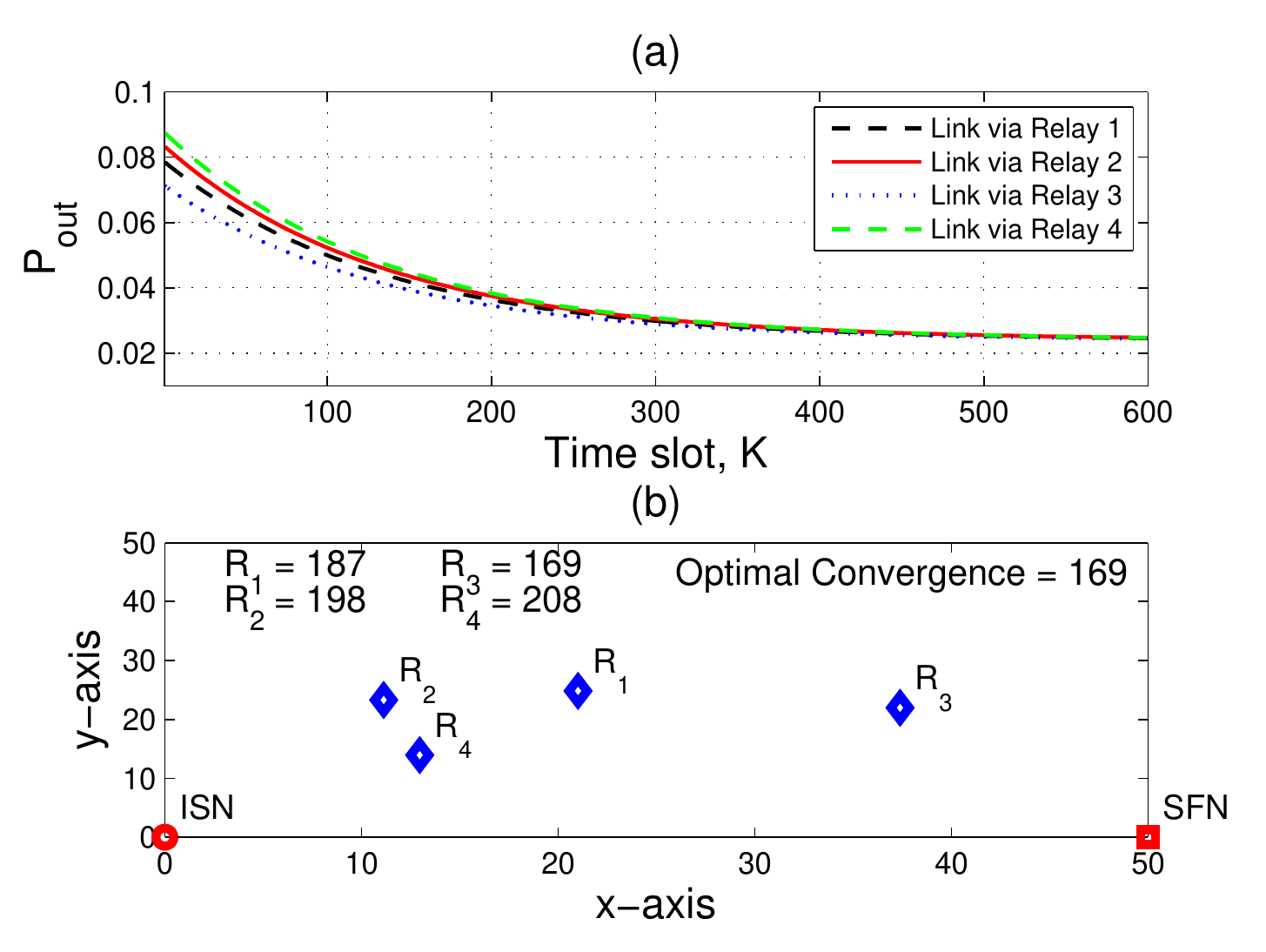}
 %where an .eps filename suffix will be assumed under latex,
% and a .pdf suffix will be assumed for pdflatex; or what has been declared
% via \DeclareGraphicsExtensions.
\caption{Four Relay Scenario (a) Convergence behaviour for each link using the OLFP scheme, (b) Optimal selection of randomly deployed fog relays.}
\label{fourRelay}
\end{figure}

\section{Conclusion}
\label{conclusion}
In this paper, we analyzed the outage performance of relays in a fog-based IoT Network.
Firstly, the outage probability of the presented system model was derived,
leading to the formulation of the non-convex optimization problem. We presented three design schemes using the steepest descent method to ensure that the outage probability is minimized. Simulations reveal that the joint location and transmit power-control optimization of the relay significantly minimized the communication outage within the fog-based IoT network.
Moreover, since the proposed scheme can be readily deployed, the system designer may opt for any of the proposed schemes, depending on the design requirements. Secondly, an optimal relay selection algorithm was proposed to select an optimal link using the convergence criteria from potential relay candidates. This proposed strategy will not only select an optimal link, but will ensure fairness and longevity of the power-constrained fog relay. Our future work will consider incorporating the proposed mechanism into a MAC protocol for fog-based IoT networks.

\appendices
\section{Proof of Lemma 1}
From (\ref{eqn8}), we further simplify the expression~$\gamma_t$, given by
\begin{equation}\label{AAeqn1}
 \gamma_t = \frac{c P_{R\leftarrow I}^{t} P_{S\leftarrow R}^{t + 1}}{c P_{S\leftarrow R}^{t + 1} N_0 + N_0}
\end{equation}
where~$c = G ^2 / P_R^{t + 1}$. We have that~$P_{R\leftarrow I}^{t}$~and~$P_{S\leftarrow R}^{t + 1}$ follows an exponential distribution
based on the physical characteristics of the model considered.
According to (\ref{eqn8}) and (\ref{AAeqn1}), the predefined threshold of the outage can be expressed in terms of some independent variables~$Q$~and~$Z$,
\begin{equation}\label{AAeqn2}
 \mathcal{P}_{out}^t = \mathcal{P}(P_{R\leftarrow I}^{t} \leq Q + Z)
\end{equation}
where~$Q = N_0 \tilde{\gamma}$~and~$Z = N_0 \tilde{\gamma}/ c P_{S\leftarrow R}^{t + 1}$. Let~$\Upsilon = P_I^t (D_I^t)^{-\alpha}$,
we now get the outage probability, expressed as the average value over all realizations of~$\mathcal{P}(P_{R\leftarrow I}^{t} \leq Q + Z)$, as
\begin{equation}\label{AAeqn3}
\begin{split}
 \mathcal{P}_{out}^t & =  \mathop{\mathbb{E}}_{Q + Z} \{\mathcal{P} (P_{R \leftarrow I}^{t} \leq q + z | q + z)\} \\
 & = \mathop{\mathbb{E}}_{Q + Z} \Big\{ \int_0^{q + z} \frac{\exp{\Big(-x/\Upsilon \Big)}}{\Upsilon}dx \Big\}\\
 & = \mathop{\mathbb{E}}_{Q + Z} \Big\{ 1 - \exp\Big(\frac{-(q + z)}{\Upsilon}\Big) \Big\}\\
& = 1 - \exp\Big( \frac{-(N_0 \tilde{\gamma})}{\Upsilon} \Big) \cdot \mathop{\mathbb{E}}_Z \Big\{\exp\Big( \frac{-Z}{\Upsilon} \Big)\Big\}
 \end{split}
\end{equation}
In order to find the expected value of the function of~$Z$, we first find the pdf of~$Z$. We note that since~$Z = N_0 \tilde{\gamma}/ c P_{S\leftarrow R}^{t + 1}$,
we can find the pdf of~$Z$ by first finding the pdf of~$P_{S\leftarrow R}^{t + 1}$. This can be seen in the transformation
\begin{equation}\label{AAeqn4}
 f_Z(z) = \frac{N_0 \tilde{\gamma}}{c z^2} \cdot \frac{1}{\sigma} \exp{\Big(-\frac{N_0 \tilde{\gamma}}{\sigma cz}\Big)}
\end{equation}
where~$\sigma = P_{R}^{t + 1} (D_{S}^{t + 1})^{-\alpha}$, the outage probability can be expressed as

\begin{equation}\label{AAeqn5}
\begin{split}
 \mathcal{P}_{out}^t &= 1 - \frac{1}{\sigma}\exp{\Big(-\frac{N_0 \tilde{\gamma}}{\Upsilon}\Big)}                                                \\
           & \times \int_{0}^{+\infty} \exp{\Big( \frac{z}{\Upsilon} \Big)} \cdot  \exp{\Big(-\frac{N_0 \tilde{\gamma}}{\sigma cz}\Big)} \cdot \Big(\frac{N_0 \tilde{\gamma}}{cz^2}\Big) dz
\end{split}
\end{equation}

\begin{equation}\label{AAeqn6}
\begin{split}
 \mathcal{P}_{out}^t &= 1 - \exp{\Big(-\frac{N_0 \tilde{\gamma}}{\Upsilon}\Big)} \cdot 2\Psi \mathbb{B}_{-1}(2\Psi)                                                \\
          \Psi &= \sqrt{\frac{N_0 \tilde{\gamma} (\Upsilon + N_0)}{\sigma \Upsilon}}
\end{split}
\end{equation}
where~$\mathbb{B}_{-1}$ is the first order negative modified Bessel function of the second type,
which has the property~$\mathbb{B}_{-1} = ~\mathbb{B}_{1}$, expandable as~$\mathbb{B}_{1}(x)~\simeq~ 1/x + (x/2)ln(x/2)$ ~\cite{Luke1962}.
Hence, the analytical approximation for the outage probability~$\mathcal{P}_{out}^t$ is given in (\ref{eqn10}). This completes
the proof.

\section{Proof of Lemma 2}
If~$\lambda$ is chosen such that~$\mathcal{P}_{out}(x_i + \lambda h_i)$ is minimized in each iteration,
then successive directions are orthogonal~\cite{Wang2008}. Hence, we show that
%\label{Proof}
\begin{equation}\label{ABeqn1}
\begin{split}
  \frac{d\mathcal{P}_{out}(x_i + \lambda h_i)}{d \lambda} &= \sum_{j=1}^{n} \frac{\partial \mathcal{P}_{out}(x_i + \lambda h_i)}{\partial x_{ij}} \frac{(x_i + \lambda h_i)}{d \lambda}\\
&= \sum_{j=1}^{n} \nabla_j (x_i + \lambda h_i) h_{ij}\\
&=   \nabla (x_i + \lambda h_i)^T
\end{split}
\end{equation}
where~$\nabla (x_i + \lambda h_i)$ is the gradient at point~$x_i + \lambda h_i$.
If~$\hat{\lambda}$ is the value of~$\lambda$ that minimizes~$\mathcal{P}_{out}(x_i + \lambda h_i)$ then
\begin{equation}\label{ABeqn2}
\begin{split}
  &\nabla (x_i + \hat{\lambda} h_i)^T h_{i} = 0\\
 ~~&~~or\\
  &  h_{i + 1}^T \cdot h_i = 0
\end{split}
\end{equation}
where the steepest-descent direction at the point~$x_i + \hat{\lambda }h_i$ is given by
 \begin{equation}\label{ABeqn3}
 h_{i + 1} = -\nabla (x_i + \hat{\lambda} h_i)
\end{equation}
As such, successive directions~$h_i$ and~$h_{i + 1}$ are orthogonal. This completes
the proof.
% use section* for acknowledgment

%\section*{Acknowledgment}

%The authors would like to acknowledge the intellectual and material contributions of COMSATS Institute of Information Technology (CIIT) and The World Academy of Sciences (TWAS) in supporting the Fellowship (FR number: 3240293236).

% Can use something like this to put references on a page
% by themselves when using endfloat and the captionsoff option.
\ifCLASSOPTIONcaptionsoff
  \newpage
\fi

\end{document}